\documentclass[12pt,onecolumn,draftcls]{IEEEtran}

\usepackage{amssymb}
\usepackage{amsmath,epsfig}

\begin{document}

\newtheorem{lemma}{Lemma}
\newtheorem{theorem}{Theorem}
\newtheorem{corollary}[theorem]{Corollary}
\newtheorem{Thm}{\bf Theorem}
\newcommand{\bo}[1]{\boldsymbol{#1}}
\newcommand{\beq}{\begin{equation}}
\newcommand{\eeq}{\end{equation}}
\newcommand{\etal}{{\it  et al.\ }}
\newcommand{\lb}{\left(}
\newcommand{\rb}{\right)}
\newcommand{\lsb}{\left[}
\newcommand{\rsb}{\right]}
\newcommand{\la}{\left\{ }
\newcommand{\ra}{\right\} }
\newcommand{\lan}{\left\langle }
\newcommand{\ran}{\right\rangle }
\newcommand{\lbg}{\left\lceil}
\newcommand{\rbg}{\right\rceil}
\newcommand{\defi}{\stackrel{\bigtriangleup}{=}}
\newcommand{\Range}{{Range\,}}
\newcommand{\diag}{\,\mbox{diag}\,}
\newcommand{\blockdiag}{\,\mbox{blockdiag}\,}
\newcommand{\diagb}{\,\overline{\mbox{diag}}\,}
\newcommand{\E}{\,\mbox{E}\;}
\newcommand{\MSE}{\mbox{MSE}}
\newcommand{\TS}{N_{TS}}

\newcommand{\dsum}{\displaystyle\sum}
\newcommand{\difrac}{\displaystyle\frac}
\newcommand{\dmin}{\displaystyle\min}
\newcommand{\dmax}{\displaystyle\max}
\newcommand{\dlim}{\displaystyle\lim}
\newcommand{\dinf}{\displaystyle\inf}

\def\adots{\mathinner{\mskip0mu\raise0pt\vbox{\kern7pt\hbox{.}}\mskip3mu
          \raise4pt\hbox{.}\mskip3mu\raise8pt\hbox{.}\mskip0mu}}
\newcommand{\m}{{\!\, -\!\,}}
\newcommand{\p}{{\!\, +\!\,}}
\newcommand{\T}{H}
\newcommand{\tT}{T}
\newcommand{\tcc}{*}
\newcommand{\w}{w}
\newcommand{\z}{{\it z}}        
\newcommand{\eps}{\epsilon}
\newcommand{\oalpha}{\boldsymbol{\alpha}}
\newcommand{\at}{@@}
\newcommand{\pac}{\dagger}      
\newcommand{\Lu}{\underline{L}}
\newcommand{\rank}{\mbox{rank}}
\newcommand{\bfPb}{\overline{\bfP}}

\newcommand{\lla}{l_1}
\newcommand{\llb}{l_2}
\newcommand{\llc}{l_3}
\newcommand{\lld}{l_4}
\newcommand{\lle}{l_5}
\newcommand{\llf}{l_6}
\newcommand{\llg}{l_7}
\newcommand{\llh}{l_8}
\newcommand{\lli}{l_9}
\newcommand{\llj}{l_10}

\newcommand{\bmf}[1]{\mbox{\boldmath ${#1}$}}
\newcommand{\bmfscript}[1]{\mbox{\scriptsize{\boldmath $\displaystyle{#1}$}}}
\newcommand{\mbf}[1]{\mbox{\boldmath ${#1}$}}
\newcommand{\mbfscript}[1]{\mbox{\scriptsize{\boldmath $\displaystyle{#1}$}}}
\newcommand{\bdmA}{\mbox{\boldmath $A$}}
\newcommand{\bdmB}{\mbox{\boldmath $B$}}
\newcommand{\bdmC}{\mbox{\boldmath $C$}}
\newcommand{\bdmD}{\mbox{\boldmath $D$}}
\newcommand{\bdmE}{\mbox{\boldmath $E$}}
\newcommand{\bdmF}{\mbox{\boldmath $F$}}
\newcommand{\bdmG}{\mbox{\boldmath $G$}}
\newcommand{\bdmH}{\mbox{\boldmath $H$}}
\newcommand{\bdmI}{\mbox{\boldmath $I$}}
\newcommand{\bdmJ}{\mbox{\boldmath $J$}}
\newcommand{\bdmK}{\mbox{\boldmath $K$}}
\newcommand{\bdmL}{\mbox{\boldmath $L$}}
\newcommand{\bdmM}{\mbox{\boldmath $M$}}
\newcommand{\bdmN}{\mbox{\boldmath $N$}}
\newcommand{\bdmO}{\mbox{\boldmath $O$}}
\newcommand{\bdmP}{\mbox{\boldmath $P$}}
\newcommand{\bdmQ}{\mbox{\boldmath $Q$}}
\newcommand{\bdmR}{\mbox{\boldmath $R$}}
\newcommand{\bdmS}{\mbox{\boldmath $S$}}
\newcommand{\bdmT}{\mbox{\boldmath $T$}}
\newcommand{\bdmU}{\mbox{\boldmath $U$}}
\newcommand{\bdmV}{\mbox{\boldmath $V$}}
\newcommand{\bdmW}{\mbox{\boldmath $W$}}
\newcommand{\bdmX}{\mbox{\boldmath $X$}}
\newcommand{\bdmY}{\mbox{\boldmath $Y$}}
\newcommand{\bdmZ}{\mbox{\boldmath $Z$}}


\newcommand{\bdma}{\mbox{\boldmath $a$}}
\newcommand{\bdmb}{\mbox{\boldmath $b$}}
\newcommand{\bdmc}{\mbox{\boldmath $c$}}
\newcommand{\bdmd}{\mbox{\boldmath $d$}}
\newcommand{\bdme}{\mbox{\boldmath $e$}}
\newcommand{\bdmf}{\mbox{\boldmath $f$}}
\newcommand{\bdmg}{\mbox{\boldmath $g$}}
\newcommand{\bdmh}{\mbox{\boldmath $h$}}
\newcommand{\bdmi}{\mbox{\boldmath $i$}}
\newcommand{\bdmj}{\mbox{\boldmath $j$}}
\newcommand{\bdmk}{\mbox{\boldmath $k$}}
\newcommand{\bdml}{\mbox{\boldmath $l$}}
\newcommand{\bdmm}{\mbox{\boldmath $m$}}
\newcommand{\bdmn}{\mbox{\boldmath $n$}}
\newcommand{\bdmo}{\mbox{\boldmath $o$}}
\newcommand{\bdmp}{\mbox{\boldmath $p$}}
\newcommand{\bdmq}{\mbox{\boldmath $q$}}
\newcommand{\bdmr}{\mbox{\boldmath $r$}}
\newcommand{\bdms}{\mbox{\boldmath $s$}}
\newcommand{\bdmt}{\mbox{\boldmath $t$}}
\newcommand{\bdmu}{\mbox{\boldmath $u$}}
\newcommand{\bdmv}{\mbox{\boldmath $v$}}
\newcommand{\bdmw}{\mbox{\boldmath $w$}}
\newcommand{\bdmx}{\mbox{\boldmath $x$}}
\newcommand{\bdmy}{\mbox{\boldmath $y$}}
\newcommand{\bdmz}{\mbox{\boldmath $z$}}


\newcommand{\srfA}{\mbox{$\mathsf A$}}
\newcommand{\srfB}{\mbox{$\mathsf B$}}
\newcommand{\srfC}{\mbox{$\mathsf C$}}
\newcommand{\srfD}{\mbox{$\mathsf D$}}
\newcommand{\srfE}{\mbox{$\mathsf E$}}
\newcommand{\srfF}{\mbox{$\mathsf F$}}
\newcommand{\srfG}{\mbox{$\mathsf G$}}
\newcommand{\srfH}{\mbox{$\mathsf H$}}
\newcommand{\srfI}{\mbox{$\mathsf I$}}
\newcommand{\srfJ}{\mbox{$\mathsf J$}}
\newcommand{\srfK}{\mbox{$\mathsf K$}}
\newcommand{\srfL}{\mbox{$\mathsf L$}}
\newcommand{\srfM}{\mbox{$\mathsf M$}}
\newcommand{\srfN}{\mbox{$\mathsf N$}}
\newcommand{\srfO}{\mbox{$\mathsf O$}}
\newcommand{\srfP}{\mbox{$\mathsf P$}}
\newcommand{\srfQ}{\mbox{$\mathsf Q$}}
\newcommand{\srfR}{\mbox{$\mathsf R$}}
\newcommand{\srfS}{\mbox{$\mathsf S$}}
\newcommand{\srfT}{\mbox{$\mathsf T$}}
\newcommand{\srfU}{\mbox{$\mathsf U$}}
\newcommand{\srfV}{\mbox{$\mathsf V$}}
\newcommand{\srfW}{\mbox{$\mathsf W$}}
\newcommand{\srfX}{\mbox{$\mathsf X$}}
\newcommand{\srfY}{\mbox{$\mathsf Y$}}
\newcommand{\srfZ}{\mbox{$\mathsf Z$}}


\newcommand{\srfa}{\mbox{$\mathsf a$}}
\newcommand{\srfb}{\mbox{$\mathsf b$}}
\newcommand{\srfc}{\mbox{$\mathsf c$}}
\newcommand{\srfd}{\mbox{$\mathsf d$}}
\newcommand{\srfe}{\mbox{$\mathsf e$}}
\newcommand{\srff}{\mbox{$\mathsf f$}}
\newcommand{\srfg}{\mbox{$\mathsf g$}}
\newcommand{\srfh}{\mbox{$\mathsf h$}}
\newcommand{\srfi}{\mbox{$\mathsf i$}}
\newcommand{\srfj}{\mbox{$\mathsf j$}}
\newcommand{\srfk}{\mbox{$\mathsf k$}}
\newcommand{\srfl}{\mbox{$\mathsf l$}}
\newcommand{\srfm}{\mbox{$\mathsf m$}}
\newcommand{\srfn}{\mbox{$\mathsf n$}}
\newcommand{\srfo}{\mbox{$\mathsf o$}}
\newcommand{\srfp}{\mbox{$\mathsf p$}}
\newcommand{\srfq}{\mbox{$\mathsf q$}}
\newcommand{\srfr}{\mbox{$\mathsf r$}}
\newcommand{\srfs}{\mbox{$\mathsf s$}}
\newcommand{\srft}{\mbox{$\mathsf t$}}
\newcommand{\srfu}{\mbox{$\mathsf u$}}
\newcommand{\srfv}{\mbox{$\mathsf v$}}
\newcommand{\srfw}{\mbox{$\mathsf w$}}
\newcommand{\srfx}{\mbox{$\mathsf x$}}
\newcommand{\srfy}{\mbox{$\mathsf y$}}
\newcommand{\srfz}{\mbox{$\mathsf z$}}

\newcommand{\oneb}{\mbox{\boldmath $1$}}
\newcommand{\zerb}{\mbox{\boldmath $0$}}

\newcommand{\Us}{\mbox{$\mathcal U$}}
\newcommand{\Vs}{\mbox{$\mathcal V$}}


\newcommand{\bdSA}{\mbox{\bf A}}
\newcommand{\bdSB}{\mbox{\bf B}}
\newcommand{\bdSC}{\mbox{\bf C}}
\newcommand{\bdSD}{\mbox{\bf D}}
\newcommand{\bdSE}{\mbox{\bf E}}
\newcommand{\bdSF}{\mbox{\bf F}}
\newcommand{\bdSG}{\mbox{\bf G}}
\newcommand{\bdSH}{\mbox{\bf H}}
\newcommand{\bdSI}{\mbox{\bf I}}
\newcommand{\bdSJ}{\mbox{\bf J}}
\newcommand{\bdSK}{\mbox{\bf K}}
\newcommand{\bdSL}{\mbox{\bf L}}
\newcommand{\bdSM}{\mbox{\bf M}}
\newcommand{\bdSN}{\mbox{\bf N}}
\newcommand{\bdSO}{\mbox{\bf O}}
\newcommand{\bdSP}{\mbox{\bf P}}
\newcommand{\bdSQ}{\mbox{\bf Q}}
\newcommand{\bdSR}{\mbox{\bf R}}
\newcommand{\bdSS}{\mbox{\bf S}}
\newcommand{\bdST}{\mbox{\bf T}}
\newcommand{\bdSU}{\mbox{\bf U}}
\newcommand{\bdSV}{\mbox{\bf V}}
\newcommand{\bdSW}{\mbox{\bf W}}
\newcommand{\bdSX}{\mbox{\bf X}}
\newcommand{\bdSY}{\mbox{\bf Y}}
\newcommand{\bdSZ}{\mbox{\bf Z}}

\newcommand{\bdSa}{\mbox{\bf a}}
\newcommand{\bdSb}{\mbox{\bf b}}
\newcommand{\bdSc}{\mbox{\bf c}}
\newcommand{\bdSd}{\mbox{\bf d}}
\newcommand{\bdSe}{\mbox{\bf e}}
\newcommand{\bdSf}{\mbox{\bf f}}
\newcommand{\bdSg}{\mbox{\bf g}}
\newcommand{\bdSh}{\mbox{\bf h}}
\newcommand{\bdSi}{\mbox{\bf i}}
\newcommand{\bdSj}{\mbox{\bf j}}
\newcommand{\bdSk}{\mbox{\bf k}}
\newcommand{\bdSl}{\mbox{\bf l}}
\newcommand{\bdSm}{\mbox{\bf m}}
\newcommand{\bdSn}{\mbox{\bf n}}
\newcommand{\bdSo}{\mbox{\bf o}}
\newcommand{\bdSp}{\mbox{\bf p}}
\newcommand{\bdSq}{\mbox{\bf q}}
\newcommand{\bdSr}{\mbox{\bf r}}
\newcommand{\bdSs}{\mbox{\bf s}}
\newcommand{\bdSt}{\mbox{\bf t}}
\newcommand{\bdSu}{\mbox{\bf u}}
\newcommand{\bdSv}{\mbox{\bf v}}
\newcommand{\bdSw}{\mbox{\bf w}}
\newcommand{\bdSx}{\mbox{\bf x}}
\newcommand{\bdSy}{\mbox{\bf y}}
\newcommand{\bdSz}{\mbox{\bf z}}

\newcommand{\cA}{{\mathcal A}}
\newcommand{\cB}{{\mathcal B}}
\newcommand{\cC}{{\mathcal C}}
\newcommand{\cD}{{\mathcal D}}
\newcommand{\cE}{{\mathcal E}}
\newcommand{\cF}{{\mathcal F}}
\newcommand{\cG}{{\mathcal G}}
\newcommand{\cH}{{\mathcal H}}
\newcommand{\cI}{{\mathcal I}}
\newcommand{\cJ}{{\mathcal J}}
\newcommand{\cK}{{\mathcal K}}
\newcommand{\cL}{{\mathcal L}}
\newcommand{\cM}{{\mathcal M}}
\newcommand{\cN}{{\mathcal N}}
\newcommand{\cO}{{\mathcal O}}
\newcommand{\cP}{{\mathcal P}}
\newcommand{\cQ}{{\mathcal Q}}
\newcommand{\cR}{{\mathcal R}}
\newcommand{\cS}{{\mathcal S}}
\newcommand{\cT}{{\mathcal T}}
\newcommand{\cU}{{\mathcal U}}
\newcommand{\cV}{{\mathcal V}}
\newcommand{\cW}{{\mathcal W}}
\newcommand{\cX}{{\mathcal X}}
\newcommand{\cY}{{\mathcal Y}}
\newcommand{\cZ}{{\mathcal Z}}

\newcommand{\overbdmA}{\mbox{\boldmath $\bar{A}$}}
\newcommand{\Tau}{\mbox{$\cT$}}
\newcommand{\overtau}{\mbox{$\bar{\cT}$}}
\newcommand{\tiltau}{\mbox{$\widetilde{\cT}$}}
\newcommand{\overbdmV}{\mbox{\boldmath $\bar{V}$}}

\newcommand{\overdelta}{\mbox{$\bar{\delta}$}}

\newcommand{\sv}{\mbox{$\sigma^2_v$}}
\newcommand{\sa}{\mbox{$\sigma^2_a$}}
\newcommand{\sbb}{\mbox{$\sigma^2_b$}}
\newcommand{\sigs}{\sigma^2}
\newcommand{\sx}{\mbox{$\sigma^2_x$}}

\newcommand{\ah}{\widehat{a}}
\newcommand{\bh}{\widehat{b}}
\newcommand{\Rh}{\widehat{R}}
\newcommand{\Yu}{\underline{Y}}
\newcommand{\Yb}{\overline{Y}}
\newcommand{\Su}{\underline{S}}
\newcommand{\Sba}{\overline{S}}
\newcommand{\bdmcu}{\underline{\bdmc}}
\newcommand{\bdmcb}{\overline{\bdmc}}
\newcommand{\oalphab}{\overline{\oalpha}}

\newcommand{\bdmfh}{\widehat{\bdmf}}

\newcommand{\bdmFt}{\widetilde{\bdmF}}
\newcommand{\bdmGt}{\widetilde{\bdmG}}
\newcommand{\bdmAt}{\widetilde{\bdmA}}

\newcommand{\bfH}{\mbox{\bf H}}
\newcommand{\bfF}{\mbox{\bf F}}
\newcommand{\bfG}{\mbox{\bf G}}
\newcommand{\bfY}{\mbox{\bf Y}}
\newcommand{\bfP}{\mbox{\bf P}}
\newcommand{\bfQ}{\mbox{\bf Q}}
\newcommand{\bfR}{\mbox{\bf R}}
\newcommand{\bfU}{\mbox{\bf U}}
\newcommand{\bfV}{\mbox{\bf V}}
\newcommand{\bfA}{\mbox{\bf A}}
\newcommand{\bfB}{\mbox{\bf B}}
\newcommand{\bfW}{\mbox{\bf W}}
\newcommand{\bfT}{\mbox{\bf T}}
\newcommand{\bfD}{\mbox{\bf D}}
\newcommand{\bfC}{\mbox{\bf C}}
\newcommand{\bfL}{\mbox{\bf L}}
\newcommand{\bfN}{\mbox{\bf N}}
\newcommand{\bfI}{\mbox{\bf I}}
\newcommand{\bfE}{\mbox{\bf E}}
\newcommand{\bfZ}{\mbox{\bf Z}}
\newcommand{\bfX}{\mbox{\bf X}}
\newcommand{\bfM}{\mbox{\bf M}}
\newcommand{\bfO}{\mbox{\bf O}}
\newcommand{\bfJ}{\mbox{\bf J}}

\newcommand{\bfSigma}{\mbox{\boldmath $\Sigma$}}
\newcommand{\bfLambda}{\mbox{\boldmath $\Lambda$}}
\newcommand{\bfPhi}{\mbox{\boldmath $\Phi$}}
\newcommand{\bfalpha}{\mbox{\boldmath $\alpha$}}
\newcommand{\bfDN}{\mbox{\bf DN}}

\newfont{\bb}{msbm10 scaled 1100}
\newcommand{\RR}{\mbox{\bb R}}
\newcommand{\CC}{\mbox{\bb C}}
\newcommand{\FF}{\mbox{\bb F}}

\newcommand{\bfy}{\mbox{\bf y}}
\newcommand{\bfh}{\mbox{\bf h}}
\newcommand{\bff}{\mbox{\bf f}}
\newcommand{\bfv}{\mbox{\bf v}}
\newcommand{\bfx}{\mbox{\bf x}}
\newcommand{\bfa}{\mbox{\bf a}}
\newcommand{\bfb}{\mbox{\bf b}}
\newcommand{\bfr}{\mbox{\bf r}}
\newcommand{\bfe}{\mbox{\bf e}}
\newcommand{\bfw}{\mbox{\bf w}}
\newcommand{\bfc}{\mbox{\bf c}}
\newcommand{\bfu}{\mbox{\bf u}}
\newcommand{\bfg}{\mbox{\bf g}}
\newcommand{\bfs}{\mbox{\bf s}}
\newcommand{\bft}{\mbox{\bf t}}
\newcommand{\bfz}{\mbox{\bf z}}

\newcommand{\bmP}{\mbox{\boldmath $P$}}

\newcommand{\doplus}{\displaystyle\bigoplus}
\newcommand{\Nh}{\widehat{N}}
\newcommand{\bfhh}{\widehat{\bfh}}
\newcommand{\hh}{\widehat{\bfh}}
\newcommand{\bfhhh}{\widehat{\bfhh}}
\newcommand{\htt}{\widetilde{\bfh}}
\newcommand{\oXX}{\overline{\mbox{\bf XX}}}
\newcommand{\oCov}{\overline{Cov}}

\newcommand{\dint}{\displaystyle\int}
\newcommand{\doint}{\displaystyle\oint}
\newcommand{\dprod}{\displaystyle\prod}
\newcommand{\cBb}{\overline{\cB}}
\newcommand{\Cb}{\overline{C}}
\newcommand{\bfS}{\mbox{\bf S}}
\newcommand{\bfHh}{\widehat{\bfH}}
\newcommand{\bfHhh}{\widehat{\bfHh}}
\newcommand{\bfhl}{\overline{\bfh}}
\newcommand{\Hh}{\widehat{\bfH}}
\newcommand{\Ht}{\widetilde{\bfH}}
\newcommand{\Hb}{\overline{\bfH}}
\newcommand{\bfht}{{\widetilde{\bfh}\rule{0mm}{3.4mm}}}
\newcommand{\Vt}{\widetilde{\bfV}}
\newcommand{\bfYb}{\overline{\bfY}}
\newcommand{\bfXb}{\overline{\bfX}}

\newcommand{\bfyh}{\widehat{\bfy}}
\newcommand{\bfyt}{\widetilde{\bfy}}
\newcommand{\Sh}{\widehat{S}}
\newcommand{\St}{\widetilde{S}}
\newcommand{\bfRh}{\widehat{\bfR}}
\newcommand{\bfRt}{\widetilde{\bfR}}
\newcommand{\bfrt}{\overline{\bfr}}

\newcommand{\tr}{\mbox{tr}}
\newcommand{\vect}{\mbox{vec}}
\newcommand{\SNR}{\mbox{SNR}}
\newcommand{\MFB}{\mbox{MFB}}

\newcommand{\xh}{\widehat{x}}
\newcommand{\bfbh}{\widehat{\bfb}}
\newcommand{\bfbt}{\widetilde{\bfb}}
\newcommand{\bt}{\widetilde{b}}
\newcommand{\bfLb}{\overline{\bfL}}
\newcommand{\dbfb}{\Delta{\bfb}}
\newcommand{\dbfc}{\Delta{\bfc}}

\newcommand{\Ns}{N_{s}}
\newcommand{\Nt}{N_{t}}
\newcommand{\Nr}{N_{r}}

\newcommand{\bmh}{\mbox{\boldmath $h$}}

\newcommand{\xb}{\overline{x}}
\newcommand{\yb}{\overline{y}}

\newcommand{\CN}{{\cal CN}}
\newcommand{\SINR}{\mbox{SINR}}
\newcommand{\prob}{\mbox{Prob}}
\newcommand{\bfHb}{\overline{\bfH}}
\newcommand{\bfHt}{\widetilde{\bfH}}
\newcommand{\Lt}{\widetilde{L}}
\newcommand{\Dt}{\widetilde{D}}
\newcommand{\dt}{\widetilde{d}}

\title{Average Minimum Transmit Power to achieve SINR Targets: Performance Comparison of Various User Selection Algorithms}
\author{Umer~Salim,~\IEEEmembership{Member,~IEEE,}
        and~Dirk~Slock,~\IEEEmembership{Fellow,~IEEE}
\thanks{
Umer Salim is currently working at Infineon Technologies, 2600 Route des Cr\^etes, 06560 Sophia Antipolis, France (email: umer.salim@infineon.com). Dirk Slock is with Mobile Communications Department of EURECOM, France. (email: dirk.slock@eurecom.fr). This research was mostly conducted when Umer was a doctoral student at Eurecom, France. A limited part of the material in this paper appears in \cite{u_Asilomar09} and was presented at the Asilomar Conference on Signals, Systems, and Computers, 2009.
}
}
\maketitle
\begin{abstract}
In multi-user communication from one base station (BS) to multiple users, the problem of minimizing the transmit power to achieve some target guaranteed performance (rates) at users has been well investigated in the literature. Similarly various user selection algorithms have been proposed and analyzed when the BS has to transmit to a subset of the users in the system, mostly for the objective of the sum rate maximization.

We study the joint problem of minimizing the transmit power at the BS to achieve specific signal-to-interference-and-noise ratio (SINR) targets at users in conjunction with user scheduling. The general analytical results for the average transmit power required to meet guaranteed performance at the users' side are difficult to obtain even without user selection due to joint optimization required over beamforming vectors and power allocation scalars. We study the transmit power minimization problem with various user selection algorithms, namely semi-orthogonal user selection (SUS), norm-based user selection (NUS) and angle-based user selection (AUS). When the SINR targets to achieve are relatively large, the average minimum transmit power expressions are derived for NUS and SUS for any number of users. For the special case when only two users are selected, similar expressions are further derived for AUS and a performance upper bound which serves to benchmark the performance of other selection schemes. Simulation results performed under various settings indicate that SUS is by far the better user selection criterion.
\end{abstract}        
\section{Introduction}
\label{sec:intro}
\subsection{Motivation}
In multi-antenna downlink (DL) systems, the characterization of the capacity (rate) regions and the maximization of the sum rate have been among the most widely studied subjects. The capacity region of DL single antenna systems was first studied by Cover in \cite{cover_72}. After the discovery of spatial multiple antenna gains for single-user (SU) systems in \cite{telatar} \cite{Foschini}, the focus of research shifted to multiple antenna multi-user (MU) systems. Conditioned upon the availability of perfect channel state information (CSI), the capacity region of multi-antenna DL channel is known \cite{weingarten_bc} \cite{tse_bc} \cite{caire_bc} \cite{cioffi_bc} and hence the optimal (dirty paper coding (DPC), first proposed in \cite{costa_DPC} was shown to be the optimal strategy in \cite{weingarten_bc}) and a wide variety of sub-optimal (but less complicated) transmission strategies have been proposed and analyzed. In many practical wireless systems, maximizing the throughput may not be the primary objective. A very important design objective for multi-antenna MU systems is to achieve a particular link quality over all links with minimum transmission power which is equivalent to achieving certain signal-to-interference-and-noise ratios (SINR) or data rates over corresponding links. This problem, in some sense, is the dual problem of the sum rate maximization under a fixed power constraint. Certainly from an operator's perspective, the minimization of average transmit power to achieve these SINR targets is of prime importance.

Combined MU transmission with user scheduling has been widely analyzed in the sum rate maximization perspective (see \cite{yoo_ZF_BF} \cite{yoo_ZF_Scheduling} and the references therein) but very rarely for the objective of the transmit power minimization. Very pertinent questions in this area include how does the minimum average transmit power decay with the number of users or the number of BS transmit antennas. Similarly the optimal user selection scheme for transmit power minimization has never been investigated. In the context of the sum rate maximization, the semi-orthogonal user selection (SUS) has been shown to behave asymptotically optimal \cite{yoo_ZF_BF} and is widely believed to be the best greedy user selection strategy but no such study has been conducted for the transmit power optimization problem with hard SINR targets and no analytical results for average transmit power are known. Hence the characterization of the average minimum transmit power for various user selection mechanisms and relative performance comparisons are very relevant research objectives.
\subsection{The State of the Art}
The problem of minimizing the DL transmit power required to meet users' SINR constraints by joint optimization of transmit beamforming (BF) vectors and power allocation scalars was solved in \cite{schubert_TVT04} and \cite{schubert}. They showed the interesting duality of uplink (UL) and DL channels for this problem. Exploiting this UL-DL duality, they gave iterative algorithms to find the optimal BF matrices and the optimal power assignments to the users and showed the convergence of these algorithms to the optimal solution. For Gaussian MU channels (either UL or DL), they showed that the problem of minimizing the transmit power to achieve specific SINR targets bears a relatively simple solution due to the added structure which may be exploited by successive interference cancellation (SIC) in the UL and by DPC based encoding for known interference in the DL channels and the results were presented in \cite{schubert_VTC02}, \cite{schubert_ISSSTA02} and \cite{schubert}. The optimal BF strategy turns out to be the minimum-mean-square-error (MMSE) solution where each user will see no interference from the already encoded users (DPC based encoding) and each BF treats the interference of unencoded users as extra noise, and power allocation for each user is done to raise its SINR level to the target SINR. Actually the DL problem is solved by first solving the dual UL problem due to its relatively simple structure.

The performance of different user selection algorithms for transmit power minimization was studied in \cite{zhang_PIMRC07} (\cite{zhang_tx_power_eurasip} is the journal version). The Gaussian MU systems were analyzed without exploiting the extra system structure through SIC or DPC when SINR targets are large. They obtained analytical expressions for the average minimum transmit power required for guaranteed rates with norm-based user selection (NUS) and angle-based user selection (AUS) in the limiting case when only two users are selected. For the same scenario of two selected users, the expressions for average minimum transmit power were derived for NUS, AUS and SUS employing SIC (in UL) or DPC (in DL) in \cite{u_Asilomar09}. 
\subsection{Contribution}
We study the problem of average transmit power minimization to meet users' SINR constraints in conjunction with user scheduling. In this Gaussian MU system, we make use of SIC in the UL channel or DPC based encoding in the DL channel. As the channel information is already required at the BS for BF and power allocation assignments, this additional processing does not require any extra information. This problem formulation gives twofold advantage over \cite{zhang_PIMRC07}: first no iterations are required to compute the optimal BF vectors and power allocation scalars, and second less average power is required at the transmitter to satisfy the same SINR constraints at the users' side. Working under the similar setting of large SINR targets, the average minimum transmit power expressions are derived for any number of users selected through SUS, NUS or random user selection (RUS). These general results and a lemma about the instantaneous transmit power to achieve hard SINR targets become the main contributions of this work. For the case of two users transmitted simultaneously, we derive similar analytical expression with AUS. A performance upper bound is also derived for the two user case which may serve to benchmark any user selection mechanism. We compare the performance of these user selection algorithms in terms of average minimum transmit power required to satisfy users' SINR constraints. It turns out that NUS and AUS are strictly sub-optimal when compared with SUS.
\subsection{Organization}
This contribution is organized as follows. Section \ref{sec:model} describes the system model. Section \ref{sec:rev_power_min} gives a brief overview of the problem of transmit power minimization without user selection. In section \ref{sec:rev_selection}, certain user selection algorithms are reviewed for which later we analyze the performance. The main results of the chapter, the analytical expressions for the  average minimum transmit power for different user selection schemes, are presented in section \ref{sec:main}. The proof details have been relegated to appendices to keep the subject material simple and clear. The performances of these user selection algorithms are compared in section \ref{sec:perf} followed by the concluding remarks in section \ref{sec:conc}.
\section {System Model}
\label{sec:model}
The system, we consider, consists of a BS having $M$ transmit antennas and $K$ single-antenna user terminals. In the DL, the signal received by $k$-th user can be expressed as
\beq
y_k = \mathbf{h_k^{\dagger} x} + z_k, \hspace{1 cm} k = 1,2,\ldots, K
\label{eq:dl_sys}
\eeq
where $\mathbf{h_1^{\dagger}}$, $\mathbf{h_2^\dagger},\ldots,\mathbf{h_K^\dagger}$ are the channel vectors of users $1$ through user $K$ with $\mathbf{h_k} \in \mathbb{C}^{M \times 1}$, $\mathbf{x} \in \mathbb{C}^{M \times 1}$ denotes the signal transmitted by the BS and $z_1, z_2, \ldots, z_K$ are independent complex Gaussian additive noise terms with zero mean and variance $\sigma^2$. We denote the concatenation of the channels by $\mathbf{H_F^\dagger} = [\mathbf{h_1} \mathbf{h_2} \cdots \mathbf{h_K}]$, so $\mathbf{H_F}$ is  $K \times M$ forward channel matrix with $k$-th row equal to the channel of $k$-th user ($\mathbf{h_k^\dagger}$). The channel is assumed to be block fading having coherence length of $T$ symbol intervals. The entries of the forward channel matrix $\mathbf{H_F}$ are i.i.d. complex Gaussian with zero mean and unit variance. We make the simplifying assumption of the presence of perfect CSI at the transmitter (CSIT) so as to focus completely on the performance of different user selection algorithms.

The SINR constraints of the users are denoted by $\gamma_1$, $\gamma_2, \ldots \gamma_K$. As SINR is a direct measure of the successful signal decoding capability at a receiver (user), these constraints can be easily translated to rate constraints. If $K_s$ out of $K$ users (implying $K_s < K$) are selected for transmission during each coherence interval, the channel input $\mathbf{x}$ can be written as $\mathbf{x = \overline{V} P^{1/2} u}$, where $\mathbf{\overline{V}} \in \mathbb{C}^{M \times K_s}$ denotes the beamforming matrix with normalized columns, $\mathbf{P}$ is $K_s \times K_s$ diagonal power allocation matrix with positive real entries and $\mathbf{u} \in \mathbb{C}^{K_s \times 1}$ is the vector of zero-mean unit-variance Gaussian information symbols. Hence, $\mathbb{E} [\mathrm{Tr}(\mathbf{P})]$ is the average transmit power which can be minimized by optimizing over the beamforming matrix $\mathbf{\overline{V}}$ and the power allocation matrix $\mathbf{P}$ to achieve the SINR targets. We select this minimum average transmit power as the performance metric and study the performance of various user selection algorithms when users' SINR targets need to be satisfied.
\section{Overview of Transmit Power Minimization Problem}
\label{sec:rev_power_min}
The signal received by $k$-th user can be written as
\begin{eqnarray}
y_k & = & \mathbf{h_k^{\dagger} \overline{V} P^{1/2} u} + z_k, \hspace{1 cm} k = 1,2,\ldots, K_s \nonumber \\
    & = & \sqrt{p_k} \mathbf{h_k^{\dagger} \bar{v}_k} u_k + \sum_{\stackrel{j=1}{j\neq k}}^{K_s}\sqrt{p_j} \mathbf{h_k^{\dagger} \bar{v}_j} u_j + z_k, 
\label{eq:dl_sys2}
\end{eqnarray}
where $p_k$ represents the power allocated to the stream of $k$-th user. The second term in the expression represents the interference contribution at $k$-th user due to beams meant for other selected users. If the successive encoding at the transmitter is done from $K_s$ to $1$, then at $k$-th user it will receive the interference of those users which are encoded after this one. Hence the effective signal will be
\beq
y_k =  \sqrt{p_k} \mathbf{h_k^{\dagger} \bar{v}_k} u_k + \sum_{j=1}^{k-1}\sqrt{p_j} \mathbf{h_k^{\dagger} \bar{v}_j} u_j + z_k. 
\label{eq:dl_sys3}
\eeq
Based upon this received signal, the SINR of $k$-th user can be written as
\beq
\mathrm{SINR_k} = \frac{p_k |\mathbf{h_k^{\dagger} \bar{v}_k}|^2} {\displaystyle\sum_{j=1}^{k-1} p_j |\mathbf{h_k^{\dagger} \bar{v}_j}|^2 + \sigma^2 }.
\label{eq:sinr_k}
\eeq
Implicit in this SINR expression is the fact that the users are equipped with simple receivers which do not try to decode the signal of other users and hence the interference present in the received signal is treated as noise. Such receivers are commonly known in the literature as SU receivers \cite{tse_book}, \cite{cover_IT}. Without user selection, the problem of optimization of beamforming vectors and power allocation matrix was solved in \cite{schubert} and \cite{schubert_TVT04} using the UL-DL duality (see Section $4.3$ and $5.2$ in \cite{schubert} for details). They gave iterative algorithms to obtain the optimal beamforming vectors and the optimal power allocation for each user. The optimal beamforming vectors corresponding to a particular (sub-optimal) power allocation are obtained, then power allocations are updated corresponding to these beamforming vectors. This process is repeated till both converge to their optimal values. Unfortunately general closed form expressions for the transmit power required to achieve SINR targets don't exist due to intricate inter-dependence of beamforming vectors and power allocations, as is evident from eq. (\ref{eq:sinr_k}).

For Gaussian MU systems (the case of interest), the extra structure allows the use of SIC in UL or DPC based encoding in the DL. This permits to obtain the optimal BF vectors and power assignments using back substitution without any iteration. Although iterations are not required in this scenario, yet beamforming vector and power allocation of one user depend upon the BF vectors and power assignments of already treated users. If the noise variance at each user is $\sigma^2$, the minimum instantaneous transmit power required is given by the following expression taken from Section $5.2$ of \cite{schubert}.
\beq
p_{\mathrm{tx}}(\mathbf{h_1, h_2, \ldots h_{K_s}})  = \sigma^2 \sum_{i=1}^{K_s} \frac{\gamma_i}{\mathbf{h_i^{\dagger} Z_i^{-1} h_i}}, 
\eeq
where $\mathbf{Z_i}$ makes a subspace gathering the contributions from the channels of those users which will produce interference for $i$-th user and is given by the following expression
\beq
\mathbf{Z_i = I_M} + \sum_{j=1}^{i-1} p_j \mathbf{h_j h_j^{\dagger}}.
\eeq
The following lemma gives a sufficiently accurate approximation of the above given instantaneous power when SINR targets are relatively large.
\begin{lemma}[Minimum Instantaneous Transmit Power to achieve SINR Targets]
If the users' SINR targets are sufficiently large, the minimum instantaneous transmit power to achieve these targets for $K_s$ users can be closely approximated by the following expression:
\beq
p_{\mathrm{tx}}(\mathbf{h_1, h_2, \ldots h_{K_s}})  = \sigma^2 \sum_{i=1}^{K_s} \frac{\gamma_i}{\mathbf{||h_i||^2} \sin^2\theta_{(i-1)}}
\label{eq:inst_power}
\eeq
where $\theta_{(i-1)}$ is the angle which $\mathbf{h_i}$ subtends with the $(i-1)$-dimensional (interference) subspace spanned by $\mathbf{h_1, h_2 \ldots h_{i-1}}$ for $i > 1$ and $\theta_0 = \frac{\pi}{2}$. 
\end{lemma}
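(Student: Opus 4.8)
The plan is to work directly from the exact power expression $p_{\mathrm{tx}} = \sigma^2 \sum_{i=1}^{K_s} \gamma_i / (\mathbf{h_i^\dagger Z_i^{-1} h_i})$ and to show that, in the large-SINR regime, each factor $\mathbf{h_i^\dagger Z_i^{-1} h_i}$ collapses to $\|\mathbf{h_i}\|^2 \sin^2\theta_{(i-1)}$; the entire difficulty is thus concentrated in a single quadratic form. First I would fix $i$ and introduce the interference subspace $\mathcal{S}_{i-1} = \mathrm{span}\{\mathbf{h_1},\ldots,\mathbf{h_{i-1}}\}$, the orthogonal projector $\mathbf{P}$ onto $\mathcal{S}_{i-1}$, and its complement $\mathbf{P^\perp = I_M - P}$. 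The decisive structural observation is that the perturbation $\mathbf{A} = \sum_{j=1}^{i-1} p_j \mathbf{h_j h_j^\dagger}$ is Hermitian with range contained in $\mathcal{S}_{i-1}$ and annihilates $\mathcal{S}_{i-1}^\perp$; hence, in the block decomposition induced by $\mathbb{C}^M = \mathcal{S}_{i-1} \oplus \mathcal{S}_{i-1}^\perp$, the matrix $\mathbf{Z_i = I_M + A}$ is block diagonal, equal to $\mathbf{I_M}$ on $\mathcal{S}_{i-1}^\perp$ and to the positive operator $\mathbf{I + A_S}$ on $\mathcal{S}_{i-1}$, where $\mathbf{A_S}$ is the restriction of $\mathbf{A}$ to the interference subspace. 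Inverting block-wise and splitting $\mathbf{h_i}$ into $\mathbf{P h_i}$ and $\mathbf{P^\perp h_i}$ yields the exact identity $\mathbf{h_i^\dagger Z_i^{-1} h_i} = (\mathbf{P h_i})^\dagger (\mathbf{I + A_S})^{-1}(\mathbf{P h_i}) + \|\mathbf{P^\perp h_i}\|^2$.

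Second, I would take the large-SINR limit. Since the angle between $\mathbf{h_i}$ and $\mathcal{S}_{i-1}$ satisfies $\|\mathbf{P^\perp h_i}\|^2 = \|\mathbf{h_i}\|^2 \sin^2\theta_{(i-1)}$ by definition, the orthogonal term already matches the claimed expression, so it remains only to show that the in-subspace term is negligible. With $\mathbf{h_1},\ldots,\mathbf{h_{i-1}}$ linearly independent (which holds with probability one for i.i.d. Gaussian channels), $\mathbf{A_S}$ is positive definite on $\mathcal{S}_{i-1}$ with smallest eigenvalue growing at least linearly in $\min_{j<i} p_j$; therefore $(\mathbf{I + A_S})^{-1} \preceq (1 + \lambda_{\min})^{-1}\mathbf{I}$ and the first term is $O(1/\min_{j<i} p_j)$, whereas the retained term stays bounded away from zero for generic channels. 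Dropping it gives the approximation. The base case $i = 1$ is immediate: $\mathbf{Z_1 = I_M}$ gives $\mathbf{h_1^\dagger Z_1^{-1} h_1} = \|\mathbf{h_1}\|^2 = \|\mathbf{h_1}\|^2 \sin^2\theta_0$ with $\theta_0 = \pi/2$, consistent with the stated convention.

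The remaining and, I expect, most delicate point is the self-consistency of the "large power" assumption, because the $p_j$ are not free parameters but are fixed recursively by the back-substitution enforcing $\mathrm{SINR}_j = \gamma_j$. I would close this by induction on $i$: the recursively determined power obeys $p_i \propto \gamma_i/(\mathbf{h_i^\dagger Z_i^{-1} h_i})$, so once $p_1,\ldots,p_{i-1}$ have been shown to grow with the SINR targets, the previous step forces $\mathbf{h_i^\dagger Z_i^{-1} h_i} \to \|\mathbf{h_i}\|^2 \sin^2\theta_{(i-1)}$, a channel-dependent constant, whence $p_i$ grows linearly in $\gamma_i$ and the induction propagates. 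This simultaneously justifies the approximation inside every summand and shows that the relative error is uniformly $O(1/\gamma)$ for generic channel realizations, which is the precise sense in which \eqref{eq:inst_power} is \emph{closely approximated} when the SINR targets are large.
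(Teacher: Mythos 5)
Your proposal is correct, and it reaches the result by a genuinely different route than the paper. The paper works inverse-first: it applies the matrix inversion lemma to $\mathbf{Z_i^{-1}} = (\mathbf{I_M} + \mathbf{H}\mathbf{H}^{\dagger})^{-1}$ (with power-scaled columns $\sqrt{p_j}\,\mathbf{h_j}$), obtaining $\mathbf{I_M} - \mathbf{H}(\mathbf{I} + \mathbf{H}^{\dagger}\mathbf{H})^{-1}\mathbf{H}^{\dagger}$, and then drops the inner identity on the grounds that the diagonal entries $p_j\|\mathbf{h_j}\|^2 \geq \gamma_j$ are large, so that $\mathbf{Z_i^{-1}}$ collapses to the orthogonal projector onto $\mathcal{S}_{i-1}^{\perp}$ and the quadratic form becomes $\|\mathbf{h_i}\|^2\sin^2\theta_{(i-1)}$ at once. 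You never approximate the inverse: your block-diagonalization along $\mathbb{C}^M = \mathcal{S}_{i-1}\oplus\mathcal{S}_{i-1}^{\perp}$ gives the \emph{exact} identity $\mathbf{h_i^{\dagger}Z_i^{-1}h_i} = (\mathbf{P h_i})^{\dagger}(\mathbf{I}+\mathbf{A_S})^{-1}(\mathbf{P h_i}) + \|\mathbf{P^{\perp}h_i}\|^2$, isolating the entire approximation error in one nonnegative term that you bound by $\|\mathbf{P h_i}\|^2/(1+\lambda_{\min}(\mathbf{A_S})) = O(1/\min_{j<i} p_j)$. This buys three things the paper's argument lacks. First, rigor: the paper's justification via large \emph{diagonal} entries of $\mathbf{I}+\mathbf{H}^{\dagger}\mathbf{H}$ is loose, since what is actually needed is that $\lambda_{\min}(\mathbf{H}^{\dagger}\mathbf{H})$ be large, which fails for nearly collinear channels however large the diagonal is; your eigenvalue bound makes the required genericity (linear independence of $\mathbf{h_1},\ldots,\mathbf{h_{i-1}}$) explicit. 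Second, a quantitative and one-sided error estimate: since your residual term is nonnegative, the exact effective gain exceeds $\|\mathbf{h_i}\|^2\sin^2\theta_{(i-1)}$, so the approximation \eqref{eq:inst_power} is an upper bound on the true minimum power, with relative error $O(1/\gamma)$. Third, your closing induction makes explicit the self-consistency of the large-power assumption, which the paper handles only implicitly through the observations $p_1\|\mathbf{h_1}\|^2 = \gamma_1$ and $p_2\|\mathbf{h_2}\|^2 = \gamma_2/\sin^2\theta_1 > \gamma_2$. The paper's derivation, in exchange, is more computational and mirrors the back-substitution used to compute the powers in practice, proceeding user by user ($i=1,2,3$, then generalizing).
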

\begin{proof}
The proof details for this lemma appear in Appendix \ref{app:tx_power_lemma}.
\end{proof}
This lemma about the required transmit power to achieve SINR targets bears a very nice intuitive explanation. It says that the effective channel strength of each user (taking into account the interference streams that it has to deal with) is the energy in the projection of this user's channel when it is projected on the null space of its interference subspace, the subspace spanned by the channels of those users who create interference for this user (as a function of encoding order). In our setting where encoding order is $K_s$ to $1$, the interference subspace for user $i$ is the subspace spanned by the channels of users $1, 2 \ldots i-1$. Then each user is allocated the minimal power corresponding to its effective channel energy such that it achieves its SINR target. The sum of these powers gives the minimum instantaneous transmit power required to achieve SINR targets at $K_s$ active users.
%
\section{Review of User Selection Algorithms}
\label{sec:rev_selection}
There is a plethora of user selection algorithms in the literature and hence a comprehensive review is out of the scope of this paper. In this section, we briefly give the overview of three most famous user selection algorithms for which we later study the problem of transmit power minimization and derive the corresponding average power expressions.
\subsection{Norm-Based User Selection (NUS)}
\label{sec:rev_NUS}
In NUS, the users are selected based only upon their channel strengths. Hence $K$ users are sorted in the descending order of their channel norm values, and the first $K_s$ (strongest) users are selected for transmission in each scheduling interval.
\subsection{Angle-Based User Selection (AUS)}
\label{sec:rev_AUS}
The user selection criterion in AUS is the mutual orthogonality of users' channel vectors. The first user is selected which has the largest channel norm. The second user is selected as the one which is the most orthogonal to this user, without paying any regard to its channel strength. The third selected user is the one whose channel vector is the most orthogonal to the subspace spanned by the two already selected users' channels. This process is repeated till $K_s$ users have been selected.
\subsection{Semi-Orthogonal User Selection (SUS)}
\label{sec:rev_SUS}
The user selection metric for SUS is the combination of the channel strength and its spatial orthogonality with respect to the other users. The first selected user is the one with the largest channel norm. The second selected user would be the one whose projection on the null space of the first user has the largest norm. The third selected user will be the one whose projection on the null space of the subspace spanned by the channel vectors of the first two users has the largest norm. This process is repeated till $K_s$ users get selected. Interested readers can find the details of this algorithm in \cite{yoo_ZF_BF} or \cite{yoo_ZF_Scheduling}.
\subsection{Random User Selection (RUS)}
\label{sec:rev_RUS}
The RUS selects the active users independent of their channel realizations. Hence the active users can be selected following the round-robin algorithms for fairness in terms of being in the active pool or the active users can be selected based upon the users' subscription conditions (users paying more rates to service providers could be given some kind of priority over other users). Clearly this is not a good criteria of choosing the active users for the objective of minimum transmit power to achieve SINR targets but it will serve the purpose of performance lower bound for any user selection mechanism where users are selected based upon their channel conditions.
\section{Transmit Power with User Selection - Main Results}  
\label{sec:main}
In this section, we give the main results of this paper, the analytical expressions for the average minimum transmit power required to achieve specific SINR targets at users. The users are selected obeying different user selection algorithms as detailed in Section \ref{sec:rev_selection} and in the second step, we compute the optimal beamforming vectors and power assignments following the steps outlined in Section \ref{sec:rev_power_min}. We work with the assumption that all the users have the same SINR targets $\gamma$ otherwise the users with smaller SINR targets become relatively better candidates compared to those with higher targets for the objective of transmit power minimization for a fixed number of users treated simultaneously. The proofs have been relegated to appendices for simplicity and lucidity. The results for NUS, SUS and RUS are fully general and hold for any number of active users whereas for AUS and performance upper bound, we could only derive the results when two active users are selected for simultaneous transmission. 
\begin{theorem}[Average Minimum Transmit Power for NUS]
Consider a DL system having a BS equipped with $M$ transmit antennas and $K$ single antenna users, each having an SINR constraint of $\gamma$, and $K_s$ active users are selected for simultaneous transmission from the pool of $K$ users in each coherence block. If the active users are chosen through NUS, the average minimum transmit power, denoted as $p_{\mathrm{N}}(K_s)$, is given by:
\beq
p_{\mathrm{N}}(K_s)  =  \sigma^2 \gamma \sum_{i=1}^{K_s} \left( \mathbb{E}_{F_{||\mathbf{h}||^2}(M,K_s+1-i,K;x)} \left[\frac{1}{x}\right] \mathbb{E}_{F_{\sin^2\theta_{(i-1)}}(M;x)}\left[\frac{1}{x}\right] \right)
\eeq
where $F_{||\mathbf{h}||^2}(M,r,K;x)$ denotes the cumulative distribution function (CDF) of $r$-th order statistic of squared norm among $K$ independent $M$-dimensional complex Gaussian vectors and $F_{\sin^2\theta_j}(M;x)$ denotes the CDF of $\sin^2\theta_j$ where $\theta_j$ is the angle that an $M$-dimensional vector subtends with an independent $j$-dimensional subspace (possibly spanned by $j$ independent $M$-dimensional vectors). All these distributions have been grouped together in Appendix \ref{app:CDFs}.
\label{th:NUS}
\end{theorem}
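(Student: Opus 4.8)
The plan is to push the expectation through the instantaneous-power formula of the preceding Lemma. Setting every target equal to $\gamma$, that formula reads $p_{\mathrm{tx}}=\sigma^2\gamma\sum_{i=1}^{K_s}\frac{1}{||\mathbf{h}_i||^2\,\sin^2\theta_{(i-1)}}$, so by linearity of expectation $p_{\mathrm{N}}(K_s)=\sigma^2\gamma\sum_{i=1}^{K_s}\mathbb{E}\!\left[\frac{1}{||\mathbf{h}_i||^2\,\sin^2\theta_{(i-1)}}\right]$. Linearity is all that is needed here: the angles $\theta_{(i-1)}$ across different $i$ need not be mutually independent, since each summand is handled on its own.

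The heart of the argument is a \emph{magnitude--direction decoupling}. For an isotropic complex Gaussian vector the squared norm $||\mathbf{h}_k||^2$ and the unit direction $\mathbf{h}_k/||\mathbf{h}_k||$ are independent, with the direction uniform on the sphere in $\mathbb{C}^M$. Because NUS ranks users by norm only, the selecting/sorting permutation is a function of the norms alone and hence independent of the directions; I would use this to conclude that, after NUS, the directions of the $K_s$ chosen users are still i.i.d.\ uniform and independent of their (ordered) norms. Since $\theta_{(i-1)}$ depends on directions only while $||\mathbf{h}_i||^2$ is a pure magnitude, each summand factors as $\mathbb{E}\!\left[\frac{1}{||\mathbf{h}_i||^2}\right]\mathbb{E}\!\left[\frac{1}{\sin^2\theta_{(i-1)}}\right]$.

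It then remains to name the two marginals. The directional factor is the angle between a uniform unit vector in $\mathbb{C}^M$ and an independent $(i-1)$-dimensional subspace, whose law depends only on $M$ and $i-1$ and is exactly $F_{\sin^2\theta_{(i-1)}}(M;x)$ of Appendix~\ref{app:CDFs}; for $i=1$ the degenerate case $\theta_0=\frac{\pi}{2}$ contributes the trivial factor $1$. The magnitude factor uses that the selected norms are the $K_s$ largest of the $K$ i.i.d.\ draws. Assigning the largest selected norm to the position carrying the most interference (and so on down), i.e.\ letting $||\mathbf{h}_i||^2$ be the $(K_s+1-i)$-th order statistic, makes its law $F_{||\mathbf{h}||^2}(M,K_s+1-i,K;x)$. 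This pairing is the average-power-minimizing one by the rearrangement inequality: $\mathbb{E}[1/\sin^2\theta_{(i-1)}]$ grows with $i$ (a random vector is less orthogonal to a larger subspace), so strong users should be loaded with heavy interference and weak users shielded from it. Inserting both marginals gives the stated expression.

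The step I expect to demand the most care is the decoupling claim, specifically the assertion that sorting by norm leaves the selected directions i.i.d.\ uniform and independent of the norms; granting this, the factorization is immediate and the identification of the two CDFs (carried out in Appendix~\ref{app:CDFs}) is routine. A lesser subtlety is fixing the order-statistic convention together with the encoding-order pairing so that the index $K_s+1-i$ appears with the correct orientation.
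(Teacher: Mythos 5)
Your proposal is correct and follows essentially the same route as the paper's proof: linearity of expectation applied to the Lemma's instantaneous-power formula, norm--direction independence of complex Gaussian vectors to decouple the order-statistic norms from the angle terms, and identification of the two marginals as $F_{||\mathbf{h}||^2}(M,K_s+1-i,K;x)$ and $F_{\sin^2\theta_{(i-1)}}(M;x)$. The only noteworthy variation is that you justify the pairing of the strongest user with the largest interference subspace via the rearrangement inequality (using that $\mathbb{E}[1/\sin^2\theta_{(i-1)}]$ grows with $i$), whereas the paper simply invokes the known optimal-ordering result that the weaker user should be decoded with the least interference, citing the literature.
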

%
\begin{corollary}[NUS for $2$ Users] When $K_s = 2$ active users are selected through NUS in each coherence block, the average minimum transmit power to achieve SINR target $\gamma$ is given by:
\beq
p_{\mathrm{N}}(2) =  \gamma \sigma^2 \left( K \alpha_{\mathrm{M,K-1}} - (K-2-\frac{1}{M-2}) \alpha_{\mathrm{M,K}} \right).
\eeq
where $\alpha_{\mathrm{M,K}}$ is a constant solely governed by $M$ and $K$ and is defined to be
\beq
\alpha_{\mathrm{M,K}} \stackrel{\Delta}{=} \int_0^{\infty} K \frac{e^{-x} x^{M-2}}{\Gamma(M)} \left[G(M,x)\right]^{K-1} dx,
\label{eq:alpha}
\eeq
where $\Gamma(M)$ and $G(M,x)$ denote the Gamma function and the regularized Gamma function \cite{abramowitz} respectively.
\label{th:NUS_cor_2}
\end{corollary}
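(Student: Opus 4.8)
\emph{The plan} is to specialize Theorem~\ref{th:NUS} to $K_s=2$ and then evaluate the two resulting pairs of expectations in closed form. Putting $K_s=2$ collapses the sum to the terms $i=1$ and $i=2$: the $i=1$ term pairs the second-largest squared-norm order statistic with $\sin^2\theta_0$, while the $i=2$ term pairs the largest squared-norm order statistic with $\sin^2\theta_1$. I would dispose of the angular factors first, since they are the simpler of the two. For $i=1$ we have $\theta_0=\frac{\pi}{2}$, so $\sin^2\theta_0=1$ and $\mathbb{E}[1/\sin^2\theta_0]=1$. For $i=2$, the relevant law from Appendix~\ref{app:CDFs} is that of the angle a complex $M$-dimensional vector makes with an independent one-dimensional subspace: $\sin^2\theta_1$ is $\mathrm{Beta}(M-1,1)$ with density $(M-1)x^{M-2}$ on $[0,1]$, whence $\mathbb{E}[1/\sin^2\theta_1]=(M-1)\int_0^1 x^{M-3}\,dx=\frac{M-1}{M-2}$ (finite precisely because $M>2$).

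Next I would compute the two squared-norm order-statistic expectations. The squared norm of an $M$-dimensional complex Gaussian vector is $\mathrm{Gamma}(M,1)$, with density $g(x)=x^{M-1}e^{-x}/\Gamma(M)$ and CDF $G(M,x)$; let $X_{(1)}\ge X_{(2)}$ denote the largest and second-largest among the $K$ i.i.d.\ squared norms. The key observation is that $\alpha_{\mathrm{M,K}}$ is exactly $\mathbb{E}[1/X_{(1)}]$: the maximum has density $K\,[G(M,x)]^{K-1}g(x)$, and multiplying by $1/x$ lowers the power of $x$ from $M-1$ to $M-2$, reproducing definition~(\ref{eq:alpha}). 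For the second-largest the density is $K(K-1)\,[1-G(M,x)]\,[G(M,x)]^{K-2}g(x)$; after multiplying by $1/x$ I would split $[1-G(M,x)]\,[G(M,x)]^{K-2}=[G(M,x)]^{K-2}-[G(M,x)]^{K-1}$ and match the two integrals against the $\alpha$ constants, obtaining
\beq
\mathbb{E}\!\left[\frac{1}{X_{(2)}}\right]=K\,\alpha_{\mathrm{M,K-1}}-(K-1)\,\alpha_{\mathrm{M,K}}.
\eeq

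Finally I would assemble the pieces. The $i=2$ term contributes $\alpha_{\mathrm{M,K}}\cdot\frac{M-1}{M-2}$ and the $i=1$ term contributes $\bigl(K\,\alpha_{\mathrm{M,K-1}}-(K-1)\,\alpha_{\mathrm{M,K}}\bigr)\cdot 1$; summing and writing $\frac{M-1}{M-2}=1+\frac{1}{M-2}$ lets me collect the coefficient of $\alpha_{\mathrm{M,K}}$ as $\frac{M-1}{M-2}-(K-1)=-\bigl(K-2-\frac{1}{M-2}\bigr)$, which yields the claimed expression after restoring the $\gamma\sigma^2$ prefactor. \textbf{The main obstacle} is the second-order-statistic computation, specifically spotting the $[1-G]\,[G]^{K-2}$ splitting that re-expresses $\mathbb{E}[1/X_{(2)}]$ through $\alpha_{\mathrm{M,K-1}}$ and $\alpha_{\mathrm{M,K}}$; once that reduction is in hand the remaining algebra is immediate. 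A secondary point to confirm is that the norm and angle factors genuinely decouple into a product of expectations, but this is inherited from Theorem~\ref{th:NUS} and rests only on the rotational invariance of the complex Gaussian (direction independent of norm), so selecting on norms alone leaves the angular distributions unbiased.
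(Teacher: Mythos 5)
Your proposal is correct and follows essentially the same route as the paper: specialize Theorem~\ref{th:NUS} to $K_s=2$, evaluate $\mathbb{E}[1/x]$ under the first- and second-order norm statistics to get $\alpha_{\mathrm{M,K}}$ and $K\alpha_{\mathrm{M,K-1}}-(K-1)\alpha_{\mathrm{M,K}}$, evaluate $\mathbb{E}[1/\sin^2\theta_1]=\frac{M-1}{M-2}$ from the density $(M-1)x^{M-2}$, and collect terms. The only cosmetic difference is that you derive the second-order-statistic identity explicitly via the density $K(K-1)[1-G][G]^{K-2}g(x)$ and the splitting $[1-G][G]^{K-2}=[G]^{K-2}-[G]^{K-1}$, whereas the paper states it with a ``similarly it can be shown'' (its CDF $K[F]^{K-1}-(K-1)[F]^K$ differentiates to exactly your density), so your write-up is if anything slightly more complete.
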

%
\begin{corollary}[NUS for $4$ Users]  When $K_s = 4$ active users are selected through NUS in each coherence block, the average minimum transmit power to achieve SINR target $\gamma$ is given by:
\beq
\begin{array}{l}
p_{\mathrm{N}}(4) = \gamma \sigma^2 \left[ \frac{M-1}{M-4} \alpha_{\mathrm{M,K}} + \left\{ K \alpha_{\mathrm{M,K-1}} - (K-1) \alpha_{\mathrm{M,K}} \right\} \frac{M-1}{M-3} + \right. \nonumber \\ 
\left\{ \frac{K(K-1)}{2} \alpha_{\mathrm{M,K-2}} - K(K-2) \alpha_{\mathrm{M,K-1}} + \frac{(K-1)(K-2)}{2} \alpha_{\mathrm{M,K}} \right\} \frac{M-1}{M-2} + \nonumber \\
\left. \left\{ \frac{K(K-1)(K-2)}{6} \alpha_{\mathrm{M,K-3}} - \frac{K(K-1)(K-3)}{2} \alpha_{\mathrm{M,K-2}} + \frac{K(K-2)(K-3)}{2} \alpha_{\mathrm{M,K-1}} - \frac{(K-1)(K-2)(K-3)}{6} \alpha_{\mathrm{M,K}} \right\} \right].
\end{array}
\eeq
\label{th:NUS_cor_4}
\end{corollary}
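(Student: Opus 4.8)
The plan is to specialise the general expression in Theorem~\ref{th:NUS} to $K_s=4$ and then to evaluate, in closed form, each of the two reciprocal moments appearing in its four summands, exactly as was done for the two-user case in Corollary~\ref{th:NUS_cor_2}. Writing out the sum for $K_s=4$, the $i$-th term ($i=1,2,3,4$) is the product of $\mathbb{E}[1/X_{[K_s+1-i]}]$, the expected reciprocal of the $(K_s+1-i)$-th largest squared norm among the $K$ users, and $\mathbb{E}[1/\sin^2\theta_{(i-1)}]$, the expected reciprocal of the angle factor against an $(i-1)$-dimensional interference subspace. Since NUS selects purely on norm, the directions of the selected channels stay isotropic and independent of their norms, so the two factors decouple and the product of expectations is legitimate (this is the same factorisation used to obtain Theorem~\ref{th:NUS}); the corollary is therefore just the evaluation and collection of these eight scalars.

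For the angle factors I would use the distribution of $\sin^2\theta_j$ tabulated in Appendix~\ref{app:CDFs}: for an $M$-dimensional complex Gaussian vector and an independent $j$-dimensional subspace, $\sin^2\theta_j$ is $\mathrm{Beta}(M-j,j)$, whence $\mathbb{E}[1/\sin^2\theta_j]=\frac{M-1}{M-j-1}$. Taking $j=i-1=0,1,2,3$ gives the four multipliers $1,\ \frac{M-1}{M-2},\ \frac{M-1}{M-3},\ \frac{M-1}{M-4}$, which are precisely the prefactors attached to the four braces in the claimed formula (the last requiring $M>4$ for finiteness, consistent with selecting four users).

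The substantive step is the norm order-statistic moment. Denoting by $f(x)=e^{-x}x^{M-1}/\Gamma(M)$ and $F(x)=G(M,x)$ the pdf and cdf of a single norm, the $r$-th largest order statistic among $K$ of them has density $\frac{K!}{(r-1)!(K-r)!}[1-F(x)]^{r-1}[F(x)]^{K-r}f(x)$. I would expand $[1-F]^{r-1}$ by the binomial theorem into pure powers of $F=G(M,\cdot)$, and then observe, using the definition~(\ref{eq:alpha}), that each resulting integral collapses via $\int_0^\infty x^{-1}f(x)[G(M,x)]^{n-1}\,dx=\alpha_{\mathrm{M},n}/n$. This yields
\beq
\mathbb{E}\!\left[\frac{1}{X_{[r]}}\right]=\frac{K!}{(r-1)!\,(K-r)!}\sum_{l=0}^{r-1}\binom{r-1}{l}\frac{(-1)^l}{K-r+l+1}\,\alpha_{\mathrm{M},K-r+l+1}.
\eeq
Specialising to $r=1,2,3,4$ reproduces, respectively, $\alpha_{\mathrm{M,K}}$; the bracket $K\alpha_{\mathrm{M,K-1}}-(K-1)\alpha_{\mathrm{M,K}}$ already met in Corollary~\ref{th:NUS_cor_2}; and the two cubic-in-$K$ combinations of $\alpha_{\mathrm{M},\cdot}$ displayed in the claim.

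Finally I would pair the $i$-th norm moment ($r=K_s+1-i$) with the $i$-th angle multiplier, sum, and multiply by $\sigma^2\gamma$ as in~(\ref{eq:inst_power}); matching term by term gives the stated expression. The main obstacle is purely combinatorial bookkeeping: keeping the binomial indices and the shifted arguments $K-r+l+1$ of $\alpha_{\mathrm{M},\cdot}$ aligned, and verifying that the largest-norm user is correctly paired with the largest interference subspace (the $i=4$, $r=1$ term), so that the four $\alpha$-combinations land against the intended $\frac{M-1}{M-j-1}$ prefactors.
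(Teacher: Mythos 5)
Your proposal is correct and follows essentially the same route as the paper: specialize Theorem~\ref{th:NUS} to $K_s=4$, evaluate the angle moments $\mathbb{E}[1/\sin^2\theta_j]=\frac{M-1}{M-j-1}$ and the reciprocal moments of the norm order statistics in terms of $\alpha_{\mathrm{M},\cdot}$ using the distributions of Appendix~\ref{app:CDFs}, and pair the $r$-th largest norm with the $(K_s-r)$-dimensional interference subspace under the optimal decoding order. Your explicit closed-form order-statistic moment formula (and its $r=1,2,3,4$ specializations) correctly reproduces all four $\alpha$-combinations in the statement, supplying the algebra the paper leaves to the reader.
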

\begin{proof}
The proof details for Theorem \ref{th:NUS} and its associated corollaries \ref{th:NUS_cor_2} and \ref{th:NUS_cor_4} appear in Appendix \ref{app:NUS}.\\
\end{proof}

\begin{theorem}[Average Minimum Transmit Power for SUS]
For an $M$ transmit antenna BS and $K$ single antenna users, if $K_s$ active users are selected through SUS for simultaneous transmission each having an SINR constraint of $\gamma$, the average minimum transmit power, denoted by $p_{\mathrm{S}}(K_s)$, is given by:
\beq
p_{\mathrm{S}}(K_s)  =  \sigma^2 \gamma \sum_{i=1}^{K_s} \left( \mathbb{E}_{F_{||\mathbf{h}||^2}(M+1-i,i,K;x)} \left[\frac{1}{x}\right] \right).
\eeq
\label{th:SUS}
\end{theorem}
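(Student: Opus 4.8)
The plan is to reduce the average-power computation to a sum of expectations of reciprocals of single scalar quantities, and then to identify the law of each such quantity as an order statistic of appropriately-dimensioned complex-Gaussian squared norms. I would begin from the instantaneous-power approximation of Lemma~1. With all targets equal to $\gamma$, the instantaneous transmit power for the $K_s$ selected users becomes $\sigma^2\gamma\sum_{i=1}^{K_s}1/g_i$, where I write $g_i=\|\mathbf{h}_i\|^2\sin^2\theta_{(i-1)}$ for the effective channel gain of the $i$-th user, i.e.\ the squared norm of the projection of $\mathbf{h}_i$ onto the null space of the subspace spanned by $\mathbf{h}_1,\ldots,\mathbf{h}_{i-1}$. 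Taking expectations and using linearity gives $p_{\mathrm{S}}(K_s)=\sigma^2\gamma\sum_{i=1}^{K_s}\mathbb{E}[1/g_i]$, so the whole problem reduces to finding the marginal law of each $g_i$ under SUS.

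The key structural observation is that, when the encoding order is taken to coincide with the SUS selection order, $g_i$ is \emph{precisely} the SUS selection metric of the $i$-th selected user: at step $i$ the algorithm picks, among the users not yet selected, the one whose channel has the largest projection onto the null space of the already-selected $(i-1)$-dimensional subspace, and that maximal squared projection norm is exactly $g_i$. Hence, unlike in NUS, no separate averaging over a channel norm and an independent angle is required; the single quantity $g_i$ carries all the relevant randomness, which is why the SUS formula contains one expectation per term rather than a product of two.

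Next I would pin down the law of $g_i$ from two ingredients. First, the dimension count: the null space of an $(i-1)$-dimensional subspace of $\mathbb{C}^M$ has dimension $M-i+1$, and the projection of an independent $\mathcal{CN}(0,\mathbf{I}_M)$ vector onto a fixed $(M-i+1)$-dimensional subspace is $\mathcal{CN}(0,\mathbf{I}_{M-i+1})$, so its squared norm is that of an $(M+1-i)$-dimensional complex Gaussian; this accounts for the first argument $M+1-i$ in $F_{\|\mathbf{h}\|^2}(M+1-i,i,K;x)$. Second, the selection index: greedily selecting the candidate with the largest projection, removing it, and re-maximizing over the remaining pool is exactly the mechanism that produces successive order statistics, so the metric of the $i$-th selected user plays the role of the $i$-th largest value among $K$ candidates (consistent with the $i=1$ case, where $g_1$ is the maximum norm among all $K$ users). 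Combining the two ingredients, I would model $g_i$ as the $i$-th largest order statistic of $K$ independent $(M+1-i)$-dimensional complex-Gaussian squared norms, i.e.\ as having CDF $F_{\|\mathbf{h}\|^2}(M+1-i,i,K;x)$, and then read $\mathbb{E}[1/g_i]$ off the order-statistic densities collected in Appendix~\ref{app:CDFs}, which yields the claimed expression.

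The main obstacle is the rigorous justification of this last distributional identification, because the two ingredients above do not coexist exactly. The greedy max-removal picture that produces the $i$-th order statistic presumes a \emph{fixed} pool of scalars, whereas here the scalars are effectively re-drawn at every step: the ambient dimension drops from $M-j+1$ to $M-j$ as the selected subspace grows, and the projections of the surviving candidates are not exactly independent of the selections already made (each survival event constrains the earlier, higher-dimensional projections, and, through the nesting of the null spaces, the current projection as well). I would therefore expect to establish the formula in the large-SINR, large-$K$ regime in which this conditioning is weak, arguing that the surviving candidates' projections onto the current null space are, to the required accuracy, independent squared norms of $(M+1-i)$-dimensional Gaussians whose greedy selection reproduces the order-statistic structure. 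Making this decoupling precise --- controlling the dependence introduced by the successive selections and the shrinking subspace --- is the crux; once it is granted, evaluating $\mathbb{E}[1/g_i]$ against the order-statistic CDF of Appendix~\ref{app:CDFs} is routine.
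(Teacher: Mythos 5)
Your reduction to $p_{\mathrm{S}}(K_s)=\sigma^2\gamma\sum_{i=1}^{K_s}\mathbb{E}\left[1/g_i\right]$, the identification of $g_i$ with the SUS selection metric once the encoding order is taken to follow the selection order, and the dimension count giving $M+1-i$ all coincide with what the paper does in Appendix \ref{app:SUS}. But you have correctly located, and then not closed, the crux: the claim that $g_i$ has CDF $F_{||\mathbf{h}||^2}(M+1-i,i,K;x)$. Your proposed resolution --- a ``large-SINR, large-$K$'' decoupling in which the surviving candidates' projections are treated as independent scalars and the greedy selection as an exact order-statistic mechanism --- is left entirely as a conjecture, and it is not something the large-SINR assumption can buy you: that assumption is already spent in Lemma 1 to obtain the form $\sum_i 1/g_i$ of the instantaneous power, and it does nothing to remove the statistical dependence between the selection events and the projected norms. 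Moreover the theorem is asserted for every finite $K$, so no large-$K$ limit is available to you either.

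The paper closes this gap with a different and concrete device. Instead of projecting the candidates onto the data-dependent null space of the already-selected channels, it orthogonalizes all $K$ channels against an \emph{arbitrary fixed} $(i-1)$-dimensional subspace; those projected squared norms are then exactly i.i.d.\ $\chi^2$ with $2(M+1-i)$ degrees of freedom, so their $i$-th largest order statistic has CDF $F_{||\mathbf{h}||^2}(M+1-i,i,K;x)$ with no asymptotics and no decoupling argument needed. Lemma 3 of \cite{Maddah_ITIT08} (also used in \cite{yoo_ZF_BF}) then shows that the true SUS metric is stochastically \emph{larger} than this fixed-subspace surrogate, so replacing $g_i$ by the surrogate can only increase $\mathbb{E}\left[1/g_i\right]$. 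The price, which the paper states explicitly at the end of Appendix \ref{app:SUS}, is that the displayed formula is not an exact equality but an upper bound on the average SUS transmit power. So the missing idea in your proposal is precisely this orthogonalization-plus-stochastic-dominance step: without it your distributional identification is unsupported, and with it the correct statement to prove is a one-sided bound rather than the exact law of $g_i$ that you were aiming for.
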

%
\begin{corollary}[SUS for $2$ Users] When $K_s = 2$ active users are selected through SUS in each coherence block, the average minimum transmit power to achieve SINR target $\gamma$ is given by:
\beq
p_{\mathrm{S}}(2) =  \gamma \sigma^2 \left( \alpha_{\mathrm{M,K}} + K \alpha_{\mathrm{M-1,K-1}} - (K-1) \alpha_{\mathrm{M-1,K}} \right).
\eeq
\label{th:SUS_cor_2}
\end{corollary}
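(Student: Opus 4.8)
The plan is to specialize Theorem~\ref{th:SUS} to $K_s = 2$ and then reduce the two resulting reciprocal-moment expectations to the constants $\alpha_{\mathrm{M,K}}$ defined in (\ref{eq:alpha}). Setting $K_s = 2$ leaves a sum of two terms, corresponding to $i=1$ and $i=2$:
\[
p_{\mathrm{S}}(2) = \sigma^2\gamma\left(\mathbb{E}_{F_{||\mathbf{h}||^2}(M,1,K;x)}\left[\tfrac{1}{x}\right] + \mathbb{E}_{F_{||\mathbf{h}||^2}(M-1,2,K;x)}\left[\tfrac{1}{x}\right]\right).
\]
I would then treat the two expectations separately and show that the first equals $\alpha_{\mathrm{M,K}}$ while the second equals $K\alpha_{\mathrm{M-1,K-1}} - (K-1)\alpha_{\mathrm{M-1,K}}$.

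For the $i=1$ term, note that under the descending order-statistic convention the first order statistic is simply the maximum, so $F_{||\mathbf{h}||^2}(M,1,K;x)$ is the CDF of the largest squared norm among $K$ independent $M$-dimensional complex Gaussian vectors. Writing its density as $K f(x) F(x)^{K-1}$ with $f(x) = e^{-x}x^{M-1}/\Gamma(M)$ the single-vector squared-norm (Gamma) density and $F(x) = G(M,x)$ its CDF, the reciprocal moment becomes $\int_0^{\infty} K\,e^{-x}x^{M-2}[G(M,x)]^{K-1}/\Gamma(M)\,dx$, which is exactly the definition of $\alpha_{\mathrm{M,K}}$ in (\ref{eq:alpha}). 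This identification is immediate and requires no further work.

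The substance is in the $i=2$ term. Here the effective squared norm of the second SUS user is distributed as the second order statistic among $K$ independent $(M-1)$-dimensional Gaussians; the reduction by one dimension, together with the count $K$ that already encodes the conditioning induced by having first selected the strongest user, is supplied by Theorem~\ref{th:SUS}. I would insert the standard density of the second (descending) order statistic, $f_{(2)}(x) = K(K-1)[1-F(x)]F(x)^{K-2}f(x)$, now with $f,F$ the $(M-1)$-dimensional Gamma density and CDF, and split the factor $1-F(x)$ into two integrals. Using the elementary identity $\int_0^{\infty}\frac{1}{x}f(x)F(x)^{n-1}\,dx = \frac{1}{n}\alpha_{\mathrm{M-1,n}}$ — which follows by comparing with (\ref{eq:alpha}) after cancelling one power of $x$ — the $F(x)^{K-2}$ integral contributes $\frac{1}{K-1}\alpha_{\mathrm{M-1,K-1}}$ and the $F(x)^{K-1}$ integral contributes $\frac{1}{K}\alpha_{\mathrm{M-1,K}}$. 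Multiplying by the prefactor $K(K-1)$ yields $K\alpha_{\mathrm{M-1,K-1}} - (K-1)\alpha_{\mathrm{M-1,K}}$, and adding the $i=1$ term gives the claimed expression.

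The only real obstacle is bookkeeping rather than anything conceptual: one must keep the dimension argument ($M$ versus $M-1$) and the count argument (matching $n-1$ to the exponent $K-2$ or $K-1$) consistent, so that the two split integrals collapse to $\alpha$-constants with the correct indices. Everything else is a direct consequence of Theorem~\ref{th:SUS} and the definition (\ref{eq:alpha}).
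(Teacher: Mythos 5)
Your proposal is correct and follows essentially the same route as the paper: specialize Theorem~\ref{th:SUS} to $K_s=2$, identify the $i=1$ expectation as $\alpha_{\mathrm{M,K}}$, and reduce the second-order-statistic expectation in $M-1$ dimensions to $K\alpha_{\mathrm{M-1,K-1}}-(K-1)\alpha_{\mathrm{M-1,K}}$. The only difference is that you write out the order-statistic density splitting and the identity $\int_0^{\infty}\tfrac{1}{x}f(x)F(x)^{n-1}dx=\tfrac{1}{n}\alpha_{\mathrm{M-1,n}}$ explicitly, whereas the paper leaves this as ``computing the expectations'' (with the analogous calculation sketched in its NUS appendix).
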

%
\begin{corollary}[SUS for $4$ Users] When $K_s = 4$ active users are selected through SUS in each coherence block, the average minimum transmit power to achieve the SINR targets is given by:
\beq
\begin{array}{l}
p_{\mathrm{S}}(4) = \gamma \sigma^2 \left[ \alpha_{\mathrm{M,K}} + K \alpha_{\mathrm{M-1,K-1}} - (K-1) \alpha_{\mathrm{M-1,K}} + \frac{K(K-1)}{2} \alpha_{\mathrm{M-2,K-2}} - K(K-2) \alpha_{\mathrm{M-2,K-1}} \right.\nonumber \\
+ \left. \frac{(K-1)(K-2)}{2} \alpha_{\mathrm{M-2,K}} + \frac{K(K-1)(K-2)}{6} \alpha_{\mathrm{M-3,K-3}} - \frac{K(K-1)(K-3)}{2} \alpha_{\mathrm{M-3,K-2}} + \frac{K(K-2)(K-3)}{2} \alpha_{\mathrm{M-3,K-1}} \right.\nonumber \\
- \left. \frac{(K-1)(K-2)(K-3)}{6} \alpha_{\mathrm{M-3,K}}  \right],
\end{array}
\eeq
\label{th:SUS_cor_4}
\end{corollary}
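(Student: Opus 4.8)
The plan is to specialize Theorem~\ref{th:SUS} to $K_s=4$ and to evaluate the four resulting expectations explicitly in terms of the constant $\alpha_{M,K}$ of (\ref{eq:alpha}). Writing $g(m,x)=e^{-x}x^{m-1}/\Gamma(m)$ for the density of a Gamma$(m,1)$ variable (the squared norm of an $m$-dimensional i.i.d.\ complex Gaussian vector) and $G(m,x)$ for its CDF, Theorem~\ref{th:SUS} gives
\[
p_{\mathrm S}(4)=\sigma^2\gamma\sum_{i=1}^{4}\mathbb{E}_{F_{\|\mathbf h\|^2}(M+1-i,i,K;x)}\!\left[\tfrac1x\right],
\]
where for each $i$ the underlying variable is the $i$-th largest order statistic of $K$ i.i.d.\ Gamma$(M+1-i,1)$ draws. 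Its density is $\frac{K!}{(K-i)!(i-1)!}[G(m,x)]^{K-i}[1-G(m,x)]^{i-1}g(m,x)$ with $m=M+1-i$, so each summand is a single one-dimensional integral against $1/x$. The whole task is thus to reduce these four integrals to linear combinations of $\alpha$-constants.

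The key observation is that the defining integral (\ref{eq:alpha}) can be rewritten, using $e^{-x}x^{m-2}/\Gamma(m)=\frac1x\,g(m,x)$, as
\[
\alpha_{m,n}=\int_0^\infty \frac1x\, n\,[G(m,x)]^{n-1}g(m,x)\,dx,
\]
i.e.\ $\alpha_{m,n}$ is exactly $\mathbb{E}[1/X_{\max}]$ for $n$ i.i.d.\ Gamma$(m,1)$ variables, since $n[G]^{n-1}g$ is the density of their maximum. More generally, for every integer $s\ge1$ I would record the identity $\int_0^\infty \frac1x[G(m,x)]^{s-1}g(m,x)\,dx=\frac1s\,\alpha_{m,s}$. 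Expanding $[1-G(m,x)]^{i-1}$ by the binomial theorem and integrating term by term then yields the master formula
\[
\mathbb{E}_{F_{\|\mathbf h\|^2}(M+1-i,i,K;x)}\!\left[\tfrac1x\right]=\frac{K!}{(K-i)!(i-1)!}\sum_{l=0}^{i-1}\binom{i-1}{l}\frac{(-1)^l}{K-i+l+1}\,\alpha_{M+1-i,\,K-i+l+1}.
\]

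Finally I would evaluate this for $i=1,2,3,4$ and collect terms. The $i=1$ term collapses to $\alpha_{M,K}$ and the $i=2$ term to $K\alpha_{M-1,K-1}-(K-1)\alpha_{M-1,K}$, reproducing Corollary~\ref{th:SUS_cor_2}; the $i=3$ and $i=4$ terms produce the groups with $\alpha_{M-2,\cdot}$ and $\alpha_{M-3,\cdot}$, matching the claimed expression after simplifying the rational coefficients (e.g.\ $\frac{K!}{(K-3)!2!}\cdot\frac1{K-2}=\frac{K(K-1)}{2}$). The genuinely substantive content --- that the $i$-th selected user's effective squared norm is distributed as the $i$-th order statistic of $K$ Gamma$(M+1-i,1)$ variables --- lives in Theorem~\ref{th:SUS}, which I take as given; consequently the corollary is essentially a bookkeeping exercise. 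The only points requiring care are the correct order-statistic density (the source of the binomial factor), the consistent collection of the alternating binomial coefficients, and the implicit convergence requirement that each $\mathbb{E}[1/x]$ be finite, which needs $M+1-i\ge2$ for all $i\le4$, i.e.\ $M\ge5$; this is where I expect the only mild subtlety to arise.
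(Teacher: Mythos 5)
Your proposal is correct and takes essentially the same route as the paper: specialize Theorem~\ref{th:SUS} to $K_s=4$ and reduce the four order-statistic expectations to $\alpha$-constants, which the paper does by plugging the pre-expanded order-statistic CDFs of Appendix~\ref{app:CDFs} into the computation pattern of the two-user case --- your binomial ``master formula'' is that same calculation written generically, and it reproduces all ten coefficients. One minor correction to your closing remark: $M\ge 5$ is not required, since the $i$-th largest of $K$ draws from a Gamma$(M+1-i,1)$ population has density behaving like $x^{(M+1-i)(K-i+1)-1}$ near zero, so every expectation (and every $\alpha_{m,n}$ in the expansion, whose defining integral converges precisely when $mn>1$) is already finite for $M\ge 4$ given $K>K_s=4$ --- which matters, as the paper's simulations use this formula at $M=4$.
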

\begin{proof}
The proof details for Theorem \ref{th:SUS} and corollaries \ref{th:SUS_cor_2} and \ref{th:SUS_cor_4} appear in Appendix \ref{app:SUS}.\\
\end{proof}

\begin{theorem}[Average Minimum Transmit Power for RUS]
For an $M$-antenna transmitter BS having $K$ single-antenna users in the pool, when $K_s$ active users are selected randomly for simultaneous transmission, the average minimum transmit power required, denoted by $p_{\mathrm{R}}(K_s)$, so that each of $K_s$ users achieves its SINR target $\gamma$  is given by:
\beq
p_{\mathrm{R}}(K_s)  =  \sigma^2 \gamma \left( \mathbb{E}_{F_{||\mathbf{h}||^2}(M;x)} \left[\frac{1}{x}\right]\right) \sum_{i=1}^{K_s} \left( \mathbb{E}_{F_{\sin^2\theta_i}(M;x)} \left[\frac{1}{x}\right] \right) = \gamma \sigma^2 \sum_{i=1}^{K_s} \frac{1}{M-i}
\label{eq:RUS}
\eeq
where $F_{||\mathbf{h}||^2}(M;x)$ denotes the CDF of the squared norm of an $M$-dimensional complex Gaussian vector which is $\chi^2$ having $2M$ degrees of freedom.
\label{th:RUS}
\end{theorem}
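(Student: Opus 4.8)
The plan is to start from the instantaneous-power expression (\ref{eq:inst_power}) with all targets equal to $\gamma$, namely $p_{\mathrm{tx}}=\sigma^2\gamma\sum_{i=1}^{K_s}1/(\|\mathbf{h_i}\|^2\sin^2\theta_{(i-1)})$, and to average it over the channel ensemble. The feature that makes RUS the easiest of the three schemes is that selection is performed independently of the channel realizations, so the $K_s$ scheduled channels $\mathbf{h_1},\ldots,\mathbf{h_{K_s}}$ are simply i.i.d.\ $M$-dimensional complex Gaussian vectors, with no order-statistic or selection bias of the kind that complicates NUS and SUS. By linearity of expectation it therefore suffices to evaluate $\mathbb{E}[1/(\|\mathbf{h_i}\|^2\sin^2\theta_{(i-1)})]$ for each $i$ separately and then sum.

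The key step is a factorization argument. For an isotropically distributed vector such as a complex Gaussian $\mathbf{h_i}$, the magnitude $\|\mathbf{h_i}\|^2$ and the direction $\mathbf{h_i}/\|\mathbf{h_i}\|$ are independent. Now $\sin^2\theta_{(i-1)}$ is a function only of the direction of $\mathbf{h_i}$ relative to the $(i-1)$-dimensional interference subspace spanned by $\mathbf{h_1},\ldots,\mathbf{h_{i-1}}$; since that subspace is built from vectors independent of $\mathbf{h_i}$, conditioning on it leaves $\|\mathbf{h_i}\|^2$ and $\sin^2\theta_{(i-1)}$ independent. Hence the expectation of the product factors as $\mathbb{E}[1/\|\mathbf{h_i}\|^2]\,\mathbb{E}[1/\sin^2\theta_{(i-1)}]$, and because every scheduled channel shares the same norm law, the common factor $\mathbb{E}[1/\|\mathbf{h}\|^2]$ pulls out of the sum, reproducing exactly the first, factored form written in (\ref{eq:RUS}).

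It then remains to evaluate the two one-dimensional expectations using the distributions collected in Appendix \ref{app:CDFs}. First, $\|\mathbf{h}\|^2$ is $\chi^2$ with $2M$ degrees of freedom, i.e.\ $\mathrm{Gamma}(M,1)$, so a single Gamma integral gives $\mathbb{E}[1/\|\mathbf{h}\|^2]=\Gamma(M-1)/\Gamma(M)=1/(M-1)$. Second, for an isotropic $M$-dimensional vector and an independent $j$-dimensional subspace, $\sin^2\theta_j$ is the fraction of the squared norm lying in the $(M-j)$-dimensional orthogonal complement and is therefore $\mathrm{Beta}(M-j,j)$ distributed; a Beta integral yields $\mathbb{E}[1/\sin^2\theta_j]=(M-1)/(M-j-1)$. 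Specializing to $j=i-1$ (with the convention $\theta_0=\pi/2$, so that the $i=1$ term contributes the factor $1$) and forming the product gives $\frac{1}{M-1}\cdot\frac{M-1}{M-i}=\frac{1}{M-i}$ for the $i$-th term, and summing over $i=1,\ldots,K_s$ produces the closed form $\sigma^2\gamma\sum_{i=1}^{K_s}1/(M-i)$.

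I expect the only genuinely delicate point to be the justification of the factorization: one must argue cleanly that $\|\mathbf{h_i}\|^2$ is independent of $\sin^2\theta_{(i-1)}$, which rests on the rotational invariance of the complex Gaussian law together with the independence of $\mathbf{h_i}$ from its interference subspace. Once this is in place, everything reduces to two standard Gamma and Beta moment integrals and no further obstacle arises; the remaining care is purely bookkeeping of the index shift between $\theta_{(i-1)}$ in (\ref{eq:inst_power}) and the summation index, which must be tracked so that the final sum runs over $1/(M-i)$ for $i=1,\ldots,K_s$.
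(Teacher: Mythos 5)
Your proposal is correct and follows essentially the same route as the paper's own proof: exploit the absence of selection bias under RUS, factor the expectation using the independence of channel norms and directions, and evaluate the resulting Gamma and Beta moment integrals to obtain $\mathbb{E}[1/\|\mathbf{h}\|^2]=1/(M-1)$ and $\mathbb{E}[1/\sin^2\theta_{(i-1)}]=(M-1)/(M-i)$, whose product summed over $i$ gives $\sigma^2\gamma\sum_{i=1}^{K_s}1/(M-i)$. You also correctly track the index shift (the angle in the $i$-th term is $\theta_{(i-1)}$, consistent with the paper's appendix and with the final closed form, despite the $\theta_i$ notation in the theorem statement), and your explicit justification of the norm--direction independence via rotational invariance is a point the paper merely asserts.
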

\begin{proof}
The proof outline is given in Appendix \ref{app:RUS}.
\end{proof}

\begin{theorem}[Average Minimum Transmit Power for AUS]
Consider a DL system having a BS equipped with $M$ transmit antennas and $K$ single antenna users, each having an SINR constraint of $\gamma$, and $K_s = 2$ users are selected for simultaneous transmission in each coherence block. If the user selection is done through AUS, the minimum average transmit power is given by:
\beq
p_{\mathrm{A}} (2) = \gamma \sigma^2 \left( \frac{1}{K-1} (\frac{K}{M-1} - \alpha_{\mathrm{M,K}} ) +  \frac{(M-1)(K-1)\alpha_{\mathrm{M,K}}}{(M-1)(K-1)-1} \right)
\label{eq:AUS}
\eeq
\label{th:AUS}
\end{theorem}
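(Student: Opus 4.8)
The plan is to reduce the statement to the instantaneous-power Lemma and then integrate out the channel statistics that AUS induces. Specializing (\ref{eq:inst_power}) to $K_s=2$ and $\gamma_1=\gamma_2=\gamma$, the instantaneous power splits into an interference-free term $\gamma\sigma^2/\|\mathbf{h}\|^2$ for the user placed in the index-$1$ slot (where $\theta_0=\pi/2$) and an interference-bearing term $\gamma\sigma^2/(\|\mathbf{h}\|^2\sin^2\theta_{(1)})$ for the user placed in the index-$2$ slot. Under AUS the two selected vectors are $\mathbf{h}_a$, the global maximum-norm vector among the $K$ users, and $\mathbf{h}_b$, the vector among the remaining $K-1$ whose angle with $\mathbf{h}_a$ has the largest $\sin^2$. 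To minimize the transmit power one assigns the stronger vector $\mathbf{h}_a$ to the interference-bearing slot (consistent with the ordering used for NUS in Theorem \ref{th:NUS}), giving $p_{\mathrm A}(2)=\gamma\sigma^2\big(\mathbb{E}[1/\|\mathbf{h}_b\|^2]+\mathbb{E}[1/(\|\mathbf{h}_a\|^2\sin^2\theta_{(1)})]\big)$; I would confirm this is the minimizing assignment by directly comparing the two slot orderings for two users.

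The main tool for the expectations is the independence, for i.i.d.\ complex Gaussian vectors, of the squared norm $\|\mathbf{h}\|^2\sim\mathrm{Gamma}(M,1)$ and the direction $\mathbf{h}/\|\mathbf{h}\|$, which is isotropic. First I would record the scalar laws needed from Appendix \ref{app:CDFs}: for one pair of independent vectors $\sin^2\theta_{(1)}$ has CDF $x^{M-1}$ on $[0,1]$, so its maximum over the $K-1$ candidates has CDF $x^{(M-1)(K-1)}$; and $\mathbb{E}[1/\|\mathbf{h}\|^2]=1/(M-1)$ for a single vector, while $\mathbb{E}[1/\max]=\alpha_{\mathrm{M,K}}$ for the maximum of $K$ squared norms, which is exactly the definition (\ref{eq:alpha}).

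For the interference-bearing term I would argue that $\|\mathbf{h}_a\|^2$ (the maximum squared norm) and $\sin^2\theta_{(1)}$ (the best angle of the $K-1$ non-maximal directions to $\mathbf{h}_a$) are independent: conditioned on the identity of the maximum-norm user and on its direction, the best angle is the maximum of $K-1$ i.i.d.\ variables whose law does not involve the norm values, so it decouples from $\|\mathbf{h}_a\|^2$. This yields $\mathbb{E}[1/(\|\mathbf{h}_a\|^2\sin^2\theta_{(1)})]=\alpha_{\mathrm{M,K}}\,\frac{(M-1)(K-1)}{(M-1)(K-1)-1}$, the second factor being $\mathbb{E}[1/\sin^2\theta_{(1)}]=\int_0^1 (M-1)(K-1)x^{(M-1)(K-1)-2}\,dx$.

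The more delicate piece, and the step I expect to be the main obstacle, is $\mathbb{E}[1/\|\mathbf{h}_b\|^2]$, the reciprocal norm of the angle-selected user. Because $\mathbf{h}_b$ is chosen purely by direction while norms and directions are independent, the index of $\mathbf{h}_b$ is uniform over the $K-1$ non-maximal users and independent of their norm values; hence $\|\mathbf{h}_b\|^2$ is distributed as a uniformly random pick among the $K-1$ smaller squared norms. By linearity, $\mathbb{E}[1/\|\mathbf{h}_b\|^2]=\frac{1}{K-1}\big(\mathbb{E}[\sum_{i=1}^K 1/\|\mathbf{h}_i\|^2]-\mathbb{E}[1/\max]\big)=\frac{1}{K-1}\big(\frac{K}{M-1}-\alpha_{\mathrm{M,K}}\big)$, using $\mathbb{E}[1/\|\mathbf{h}_i\|^2]=1/(M-1)$ and $\mathbb{E}[1/\max]=\alpha_{\mathrm{M,K}}$. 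Adding the two terms and multiplying by $\gamma\sigma^2$ reproduces (\ref{eq:AUS}). The care required is entirely in these exchangeability and independence arguments; once they are in place the remaining integrals are elementary Beta and Gamma computations.
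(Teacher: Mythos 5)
Your proposal is correct and follows essentially the same route as the paper's proof: the same decoding order (the weaker, angle-selected user placed in the interference-free slot), the same product-of-expectations decomposition justified by norm--direction independence, and the same three expectations $\alpha_{\mathrm{M,K}}$, $\frac{(M-1)(K-1)}{(M-1)(K-1)-1}$, and $\frac{1}{K-1}\left(\frac{K}{M-1}-\alpha_{\mathrm{M,K}}\right)$. Your exchangeability argument for $\mathbb{E}\left[1/\|\mathbf{h}_b\|^2\right]$ is simply a direct derivation of the expectation the paper obtains from the not-the-largest-norm CDF $F_{||\mathbf{h}||^2}(M,\acute{1},K;x)$ of eq. (\ref{eq:CDF_not_largest}) in Appendix \ref{app:CDFs}, so the two proofs coincide in substance.
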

\begin{proof}
The proof sketch appears in Appendix \ref{app:AUS}.
\end{proof}

\begin{theorem}[Performance Benchmark for $2$ Selected Users]
For a system with an $M$-antenna BS and $K$ single antenna users, a lower bound on the average transmit power (performance benchmark), in case of $K_s = 2$ active users, required to achieve SINR targets is given by:
\beq
p_{\mathrm{L}} (2) = \gamma \sigma^2 \left( K \alpha_{\mathrm{M,K-1}} - (K-1) (1-\frac{M-1}{(M-1)(K-1)-1}) \alpha_{\mathrm{M,K}} \right).
\label{eq:power_lb}
\eeq
\label{th:LB}
\end{theorem}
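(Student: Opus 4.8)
The plan is to start from the instantaneous-power expression of Lemma~1, eq.~(\ref{eq:inst_power}), specialized to $K_s=2$, where the two selected channels require
\[
p_{\mathrm{tx}} = \sigma^2\gamma\left(\frac{1}{\|\mathbf{h}_{u_1}\|^2} + \frac{1}{\|\mathbf{h}_{u_2}\|^2\sin^2\theta}\right),
\]
with $u_1$ occupying the interference-free slot ($\theta_0=\pi/2$) and $u_2$ the slot seeing the one-dimensional interference subspace spanned by $\mathbf{h}_{u_1}$, and $\theta$ the angle between the two channels. First I would settle the optimal encoding order for a fixed pair of norms $n_a\ge n_b$: comparing the two role assignments gives a difference equal to $(1/n_a-1/n_b)(1-1/\sin^2\theta)\ge 0$, so the stronger user is always placed in the interference slot and the weaker in the interference-free slot. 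This fixes which term each user feeds.

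Next I would construct the benchmark as an \emph{optimistically decoupled} selection that no genuine rule can beat, exploiting that for i.i.d.\ isotropic Gaussian channels the squared norms and the channel directions are statistically independent. For the interference-free term the best attainable reciprocal norm comes from reserving the two largest norms, so its weaker member contributes $1/X_{(2)}$, the reciprocal of the second-largest of $K$ squared norms. For the interference term I would bound the magnitude by the global maximum $X_{(1)}$ and, \emph{separately}, replace the realized angle by the most orthogonal of the $K-1$ partners of the strongest user, i.e.\ $\max_{j=1}^{K-1}\sin^2\theta_j$. Because the norm optimization and the orthogonality optimization are carried out independently over the norm and the directional statistics, the resulting quantity lower-bounds the power of any real two-user rule, which is precisely the benchmark interpretation of Theorem~\ref{th:LB}.

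It then remains to evaluate three expectations, all of which factor by the norm/direction independence. Using the second-order-statistic density $K(K-1)[1-F]\,[F]^{K-2}f$ with $f,F$ the $\chi^2_{2M}$ density and CDF, a short manipulation writes $\mathbb{E}[1/X_{(2)}]$ as $K\alpha_{\mathrm{M,K-1}}-(K-1)\alpha_{\mathrm{M,K}}$ in terms of the constants in (\ref{eq:alpha}), while $\mathbb{E}[1/X_{(1)}]=\alpha_{\mathrm{M,K}}$ by definition. For the angular factor I would use that $\sin^2\theta$ between an isotropic $M$-vector and an independent line is $\mathrm{Beta}(M-1,1)$ distributed, so the maximum of $K-1$ such variables has CDF $x^{(M-1)(K-1)}$ and $\mathbb{E}[1/\max_j\sin^2\theta_j]=\frac{(M-1)(K-1)}{(M-1)(K-1)-1}$. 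Assembling the interference-free contribution $\mathbb{E}[1/X_{(2)}]$ with the product $\alpha_{\mathrm{M,K}}\cdot\frac{(M-1)(K-1)}{(M-1)(K-1)-1}$ and collecting the coefficient of $\alpha_{\mathrm{M,K}}$ reproduces (\ref{eq:power_lb}).

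The main obstacle I anticipate is the conceptual second step: making precise why the decoupled assignment---crediting the magnitude terms with the two largest norms while simultaneously crediting the interference term with the best orthogonality over all $K-1$ candidate partners---is a legitimate lower bound across every selection rule, including those that decline to pick the strongest user. The cleanest route is a case split on whether the optimal pair contains the strongest user, bounding the interference-free term by $1/X_{(2)}$ and the interference term by $1/(X_{(1)}\max_j\sin^2\theta_j)$ in each case, combined with a careful use of the independence of the norm order statistics from the angular Beta statistics so that the expectation splits into the stated product.
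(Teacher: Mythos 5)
Your proposal is correct and follows essentially the same route as the paper's Appendix on the performance benchmark: posit a genie-aided pair that simultaneously has the two largest norms (as in NUS) and whose angle is the most orthogonal among the $K-1$ partners of the strongest user (as in AUS), apply the optimal ordering (weaker user decoded interference-free), factor the expectations via norm/direction independence, and evaluate $\mathbb{E}[1/X_{(2)}]=K\alpha_{\mathrm{M,K-1}}-(K-1)\alpha_{\mathrm{M,K}}$, $\mathbb{E}[1/X_{(1)}]=\alpha_{\mathrm{M,K}}$, and $\mathbb{E}\left[1/\max_j\sin^2\theta_j\right]=\frac{(M-1)(K-1)}{(M-1)(K-1)-1}$ exactly as you do. The rigor concern you flag in your last paragraph (why this decoupled optimization dominates every selection rule, including pairs that exclude the strongest user, given that the angle maximization is restricted to partners of the strongest user) is left unaddressed by the paper as well, which simply defines the benchmark this way, so your proposal matches the paper's argument and is, if anything, more candid about its heuristic step.
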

\begin{proof}
The proof outline is given in Appendix \ref{app:LB}.
\end{proof}
\section{Performance Comparison} 
\label{sec:perf}
In this subsection, we compare the performance of user selection algorithms treated in previous sections when the metric of interest is the average minimum transmit power required to satisfy users' SINR constraints.
\subsection{The case of $K_s = 2$ Selected Users} 
\label{sec:perf_2_users}
\begin{figure}[htbp]
	\begin{center}
		\includegraphics[width=8cm]{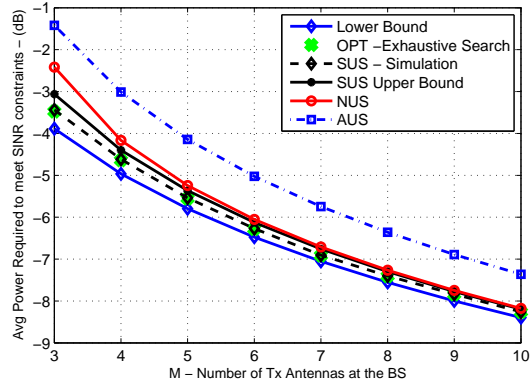}
	\end{center}
	\caption{Avg. Min. Transmit Power vs. M for $K=10$, $K_s=2$, $\gamma=10$ dB, $\sigma^2=0.1$. Curves show that SUS is the best strategy and follows closely the power lower bound. NUS also performs close to SUS with increasing number of transmit antennas.}
	\label{fig:P_vs_M_Ks=2}
\end{figure}
The plot of average minimum transmit power required to attain specific SINR targets $\gamma$ versus the number of antennas at the BS appears in Fig. \ref{fig:P_vs_M_Ks=2} for the considered user selection algorithms. We remark that SUS performs better than the other user selection schemes but with the increase in the number of transmit antennas, NUS also performs very well. The similar behaviour was observed in \cite{zhang_PIMRC07} and the reason comes from the fact that with the increase in the number of transmit antennas, users' channels start becoming (close to) spatially orthogonal (this is clearly visible through the angle distributions such as $F_{\sin^2\theta_i}(M;x)$ in appendix \ref{app:CDFs}) and furthermore, due to difference ($M-K_s$) very good beamforming vectors can be chosen to cause very small interference to other users.

\begin{figure}[htbp]
	\begin{center}
		\includegraphics[width=8cm]{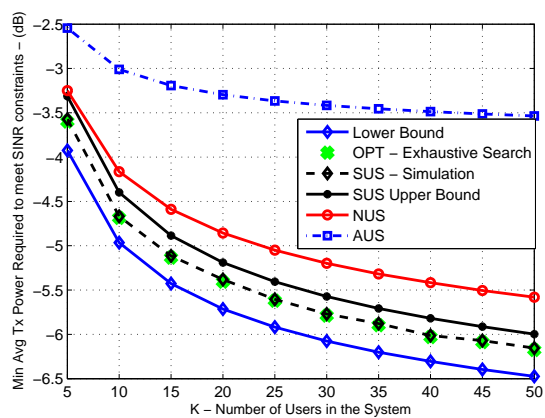}
	\end{center}
	\caption{Avg. Min. Transmit Power vs. Nb. of Users for $M=4$, $K_s=2$, $\gamma=10$ dB, $\sigma^2=0.1$. Curves show that SUS performs the best and NUS becomes sub-optimal when number of users increases.}
	\label{fig:P_vs_K_Ks=2}
\end{figure}
Fig. \ref{fig:P_vs_K_Ks=2} plots the curves of the minimum average transmit power versus the number of users for a fixed number of transmit antennas. SUS again performs very close to the optimal (obtained by exhaustive search) but we remark that NUS does not behave very well in this scenario because it just chooses users with good channel norms without paying any attention to their spatial orthogonality which may affect significantly the interference observed by the selected users.
\subsection{The Case of $K_s = 4$ Selected Users} 
\label{sec:perf_more_users}
\begin{figure}[htbp]
	\begin{center}
		\includegraphics[width=8cm]{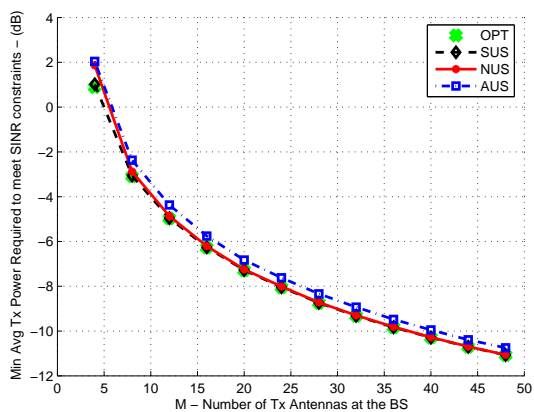}
	\end{center}
	\caption{Avg. Min. Transmit Power vs. M for $K=8$, $K_s=4$, $\gamma=10$ dB, $\sigma^2=0.1$. Curves show that SUS is the best strategy and follows closely the power lower bound. NUS also becomes optimal for a reasonably large number of transmit antennas.}
	\label{fig:P_vs_M_Ks=4}
\end{figure}

\begin{figure}[htbp]
	\begin{center}
		\includegraphics[width=8cm]{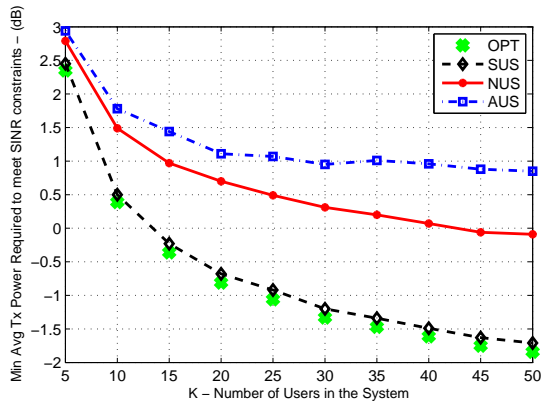}
	\end{center}
	\caption{Avg. Min. Transmit Power vs. Nb. of Users for $M=4$, $K_s=4$, $\gamma=10$ dB, $\sigma^2=0.1$. Curves show that SUS performs the best and NUS becomes strictly sub-optimal when the number of users increases.}
	\label{fig:P_vs_K_Ks=4}
\end{figure}

We plot the average minimum transmit power required to achieve certain SINR targets versus the number of transmit antennas and versus the number of system users in Fig. \ref{fig:P_vs_M_Ks=4} and Fig. \ref{fig:P_vs_K_Ks=4} respectively, for the user selection algorithms of interest. For both of these plots, the number of selected users is $4$. We observe the same behaviour as observed in the case of $2$ selected users. For large number of transmit antennas, both SUS and NUS perform very close to the optimal, even AUS achieves a reasonable performance. 

On the other hand, for a fixed number of transmit antennas at the BS when the number of users present in the system increases, the performance of NUS degrades substantially. The reason is that NUS captures the raw aspect of multi-user diversity which governs only the self signal power but pays no attention to the inter-user spatial separation which might have a larger impact on the interference power. The worst performance of AUS is expected as it pays no attention to the strength of the selected users which is quite important for power minimization objective. Moreover, SUS performs very close to the optimal, for any set of system parameters. The reason is the selection criterion of SUS where both the channel strength and the spatial orthogonality of the users are properly taken care of.
\section {Conclusions}
\label{sec:conc}
In this paper, we have studied the performance of various user selection algorithms in terms of the average minimum transmit power required to satisfy specific SINR targets at users' side. General closed form expressions of the average minimum transmit power for the three user selection algorithms, namely SUS, NUS and RUS, were derived when any number of users are selected for simultaneous transmission. Furthermore for the special case when only two users are selected for simultaneous transmission, similar expressions are derived for AUS and for performance upper bound which serves to benchmark other selection algorithms. SUS, which has been shown to behave close to optimal for the sum rate maximization objective under fixed power constraint, shows equally attractive performance in this dual problem setting of transmit power minimization to achieve hard SINR targets. For a fixed number of users and increasing number of transmit antennas, NUS performs very close to SUS. In the complementary setting of fixed number of BS transmit antennas and an increasing number of system users, NUS shows substantial performance degradation but SUS still performs very close to the optimal.
%
\appendices
\section{Proof of Minimum Instantaneous Transmit Power Lemma}
\label{app:tx_power_lemma}
The instantaneous transmit power required to achieve the SINR targets at $K_s$ active users having channels $\mathbf{h_1,h_2 \ldots h_{K_s}}$ is given by the following expression from \cite{schubert}
\beq
p_{\mathrm{tx}}(\mathbf{h_1, h_2, \ldots h_{K_s}})  = \sum_{i=1}^{K_s} \frac{\gamma_i}{\mathbf{h_i^{\dagger} Z_i^{-1} h_i}}, 
\eeq
where $\mathbf{Z_i}$ is given by the following expression
\beq
\mathbf{Z_i = I_M} + \sum_{j=1}^{i-1} p_j \mathbf{h_j h_j^{\dagger}}.
\eeq
We have further taken $\sigma^2 = 1$ following \cite{schubert} as it just appears as a constant scaling factor and may be absorbed in SINR targets as well.

The minimum power allocated to the stream of $1$st user to achieve its SINR target $\gamma_1$ is 
\beq
p_1 = \frac{\gamma_1}{\mathbf{h_1^{\dagger} I_M h_1}} = \frac{\gamma_1}{||\mathbf{h_1}||^2} .
\eeq
The power allocated to the stream of $2$nd user to achieve its SINR target $\gamma_2$ is 
\beq
p_2 = \frac{\gamma_2}{\mathbf{h_2^{\dagger} Z_2^{-1} h_2}} 
\eeq
Non-identity $\mathbf{Z_2^{-1}}$ appears because user $2$ will see the interference from the stream of $1$st user. 
\beq
\mathbf{Z_2^{-1}} = (\mathbf{I_M} + p_1 \mathbf{h_1 h_1^{\dagger})^{-1}}
\eeq
Applying the matrix inversion lemma (MIL) to the right hand side (R.H.S.) of the above equation, we get
\beq
\mathbf{Z_2^{-1}} = \mathbf{I_M} - p_1 \mathbf{h_1} (1+ p_1 ||\mathbf{h_1}||^2)^{-1} \mathbf{h_1^{\dagger}}.
\eeq
We can see from the power allocation to the stream of $1$st user that $\gamma_1 = p_1 ||\mathbf{h_1}||^2$. Thus according to the assumption made in the statement of this lemma if SINR target $\gamma_1$ is sufficiently large, the term $(1+ p_1 ||\mathbf{h_1}||^2)$ in the above equation can be closely approximated by $(p_1 ||\mathbf{h_1}||^2)$, though we use equality sign with some abuse of notation.
\beq
\mathbf{Z_2^{-1}} = \mathbf{I_M} - p_1 \mathbf{h_1} (p_1 ||\mathbf{h_1}||^2)^{-1} \mathbf{h_1^{\dagger}} = \mathbf{I_M} - \frac{\mathbf{h_1} \mathbf{h_1^{\dagger}}}{||\mathbf{h_1}||^2}
\eeq
This renders
\begin{eqnarray}
\mathbf{h_2^{\dagger} Z_2^{-1} h_2} & = & ||\mathbf{h_2}||^2 - |\mathbf{h_2^{\dagger}} \frac{\mathbf{h_1}}{||\mathbf{h_1}||}|^2 \nonumber \\
                                                             & = & ||\mathbf{h_2}||^2 (1 - \cos^2\theta_1) = ||\mathbf{h_2}||^2 \sin^2\theta_1,
\end{eqnarray}                                                             
where $\theta_1$ denotes the angle that $\mathbf{h_2}$ subtends with the $1$-dimensional subspace spanned by $\mathbf{h_1}$. Hence the power allocation done over the stream of $2$nd user so that it achieves its SINR target $\gamma_2$ would be
\beq
p_2 = \frac{\gamma_2}{||\mathbf{h_2}||^2 \sin^2\theta_1} 
\eeq
The power allocated to the stream of $3$rd user is given by 
\beq
p_3 = \frac{\gamma_3}{\mathbf{h_3^{\dagger}} \mathbf{Z_3^{-1} h_3}} 
\eeq
with
\beq
\mathbf{Z_3^{-1}} = (\mathbf{I_M} + p_1 \mathbf{h_1 h_1^{\dagger}}+ p_2 \mathbf{h_2 h_2^{\dagger}})^{-1}.
\eeq
Taking $\mathbf{\acute{h}_i} = \sqrt{p_i} \mathbf{h_i}$ and making a bigger matrix $\mathbf{H_{12} = [\acute{h}_1 \acute{h}_2]}$, we get $\mathbf{Z_3} = (\mathbf{I_M} + \mathbf{H_{12} H_{12}^{\dagger}})$. Applying MIL to the R.H.S. of the above equation, $\mathbf{Z_3^{-1}}$ can be written as
\beq
\mathbf{Z_3^{-1}} = \mathbf{I_M} - \mathbf{H_{12} (I_2+ H_{12}^{\dagger} H_{12})^{-1} H_{12}^{\dagger}}.
\eeq
$\mathbf{I_2}$ matrix adds $1$ to the diagonal elements of $\mathbf{H_{12}^{\dagger} H_{12}}$  which are $p_1 ||\mathbf{h_1}||^2$ and $p_2 ||\mathbf{h_2}||^2$ respectively. As $p_1 ||\mathbf{h_1}||^2 = \gamma_1$ and $p_2 ||\mathbf{h_2}||^2 = \frac{\gamma_2}{\sin^2\theta_1} > \gamma_2$, for large SINR targets the above equation will become
\beq
\mathbf{Z_3^{-1} = I_M - H_{12} (H_{12}^{\dagger} H_{12})^{-1} H_{12}^{\dagger}}.
\eeq
As $\mathbf{H_{12} (H_{12}^{\dagger} H_{12})^{-1} H_{12}^{\dagger}}$ is the projection matrix over the column space of $\mathbf{H_{12}}$ i.e., over the space spanned by $\mathbf{h_1}$ and $\mathbf{h_2}$, the product $\mathbf{h_3^{\dagger} Z_3^{-1} h_3}$ gives the energy of the channel $\mathbf{h_3}$ projected over the subspace orthogonal to that spanned by $\mathbf{h_1}$ and $\mathbf{h_2}$.
\begin{eqnarray}
\mathbf{h_3^{\dagger} Z_3^{-1} h_3} & = & ||\mathbf{h_3}||^2 - \mathbf{h_3^{\dagger} H_{12} (H_{12}^{\dagger} H_{12})^{-1} H_{12}^{\dagger} h_3} \nonumber \\
                                                             & = & ||\mathbf{h_3}||^2 (1 - \cos^2\theta_2) = ||\mathbf{h_3}||^2 \sin^2\theta_2
\end{eqnarray}                                                             
where $\theta_2$ is the angle subtended by $\mathbf{h_3}$ with the $2$-dimensional subspace spanned by $\mathbf{h_1}$ and $\mathbf{h_2}$. And hence the power allocated to the stream of $3$rd user to raise its SINR level to $\gamma_3$ is given by
\beq
p_3 = \frac{\gamma_3}{||\mathbf{h_3}||^2 \sin^2\theta_2}.
\eeq
In fact this procedure generalizes to any number of users and the power allocated to the stream of $i$-th active user is given by
\beq
p_i = \frac{\gamma_i}{||\mathbf{h_i}||^2 \sin^2\theta_{(i-1)}},
\eeq
where $\theta_{(i-1)}$ is the angle that $\mathbf{h_i}$ makes with the $(i-1)$-dimensional (interference) subspace spanned by $\mathbf{h_1, h_2} \ldots \mathbf{h_{i-1}}$ (this is a function of encoding order). Summing the powers allocated to all active $K_s$ users' streams, the total minimum instantaneous power to achieve SINR targets at $K_s$ users is given by:
\beq
p_{\mathrm{tx}}(\mathbf{h_1,h_2, \ldots h_{K_s}})  = \sum_{i=1}^{K_s} \frac{\gamma_i}{\mathbf{||h_i||^2} \sin^2\theta_{(i-1)}}.
\eeq
\section{Some Useful Distributions}
\label{app:CDFs}
In this appendix, we give some useful cumulative distribution functions (CDF) for which probability density functions (PDF) can be computed by simple differentiation. 
\subsection{Channel Norm Distributions}
Most of the channel norm (squared) distributions given in this subsection are known relations, others have been computed using the tools from order statistics \cite{H_David} and some of them also appear in \cite{zhang_PIMRC07}. If all the users have $M$-dimensional spatially i.i.d. complex Gaussian vector channels, the CDF of $||\mathbf{h_i}||^2$ for any $i$ is $\chi^2$ distributed with $2M$ degrees of freedom whose CDF is
\beq
F_{||\mathbf{h}||^2}(M;x)  = G(M,x),
\eeq
where G denotes the regularized Gamma function \cite{abramowitz}, and is defined as
\beq
G(M,x) = \frac{1}{\Gamma(M)} \int_0^x e^{-t} t^{M-1} dt.
\eeq
The PDF corresponding to CDF $F_{||\mathbf{h}||^2}(M;x)$ is given by
\beq
f_{||\mathbf{h}||^2}(M;x) = \frac{e^{-x} x^{M-1}}{\Gamma(M)}.
\eeq
Below we give the CDFs for the largest, the second largest, third and fourth order statistics. The CDF of the r-th largest order statistic among $K$ i.i.d. variables, each of which has the CDF of $F_{||\mathbf{h}||^2}(M;x)$, is given by \cite{H_David}
\beq
F_{||\mathbf{h}||^2}(M,r,K;x) = \sum_{j=K+1-r}^{K} \binom{K}{j} \left[F_{||\mathbf{h}||^2}(M;x)\right]^{j} \left[1-F_{||\mathbf{h}||^2}(M;x)\right]^{K-j}
\eeq

The CDF of the user having the largest channel norm among $K$ i.i.d. $M$-antenna users, denoted as $F_{||\mathbf{h}||^2}(M,1,K;x)$, each of whom is distributed as $F_{||\mathbf{h}||^2}(M;x)$ is given by
\beq
F_{||\mathbf{h}||^2}(M,1,K;x) = \left[F_{||\mathbf{h}||^2}(M;x)\right]^K.
\eeq
The CDF of the user having the second largest channel norm among $K$ i.i.d. users is 
\beq
F_{||\mathbf{h}||^2}(M,2,K;x) = K \left[F_{||\mathbf{h}||^2}(M;x)\right]^{K-1} - (K-1)\left[F_{||\mathbf{h}||^2}(M;x)\right]^K.
\eeq
The CDF of the user having the third largest channel norm among $K$ i.i.d. users is 
\beq
\begin{array}{l}
F_{||\mathbf{h}||^2}(M,3,K;x)  = \frac{K(K-1)}{2} \left[F_{||\mathbf{h}||^2}(M;x)\right]^{K-2} - K(K-2)\left[F_{||\mathbf{h}||^2}(M;x)\right]^{K-1} + \nonumber \\
\frac{(K-1)(K-2)}{2} \left[F_{||\mathbf{h}||^2}(M;x)\right]^{K} .
\end{array}
\eeq
The CDF of the user having the fourth largest channel norm among $K$ i.i.d. users is 
\beq
\begin{array}{l}
F_{||\mathbf{h}||^2}(M,4,K;x)  = \frac{K(K-1)(K-2)}{6} \left[F_{||\mathbf{h}||^2}(M;x)\right]^{K-3} - \frac{K(K-1)(K-3)}{2}\left[F_{||\mathbf{h}||^2}(M;x)\right]^{K-2} + \nonumber \\
\frac{K(K-2)(K-3)}{2} \left[F_{||\mathbf{h}||^2}(M;x)\right]^{K-1} - \frac{(K-1)(K-2)(K-3)}{6} \left[F_{||\mathbf{h}||^2}(M;x)\right]^{K-1} .
\end{array}
\eeq
The distribution of any random user among $K$ users which does not have the largest norm can be specified as (from \cite{zhang_PIMRC07})
\beq
F_{||\mathbf{h}||^2}(M,\acute{1},K;x)  = \frac{K}{K-1} F_{||\mathbf{h}||^2}(M;x) - \frac{1}{K-1} \left[F_{||\mathbf{h}||^2}(M;x)\right]^K
\label{eq:CDF_not_largest}
\eeq
where $\acute{1}$ stands for a random user which is not the first order statistic.
%
\subsection{Channel Direction Distributions}
In this subsection, we give some useful distributions of the $\sin^2$ and $\cos^2$ of the angle between a vector and a subspace. If we have $K$ i.i.d. $M$-dimensional Gaussian distributed vectors, i.e. $\mathbf{h_i} \in \mathbb{C}^M$ for user $i$, we can compute the distribution of the $\sin^2$ and $\cos^2$ of the angle between one vector and the subspace spanned by a subset of the other vectors. For a channel vector $\mathbf{h_j}$ and a subspace spanned by $i$ independent Gaussian vectors $\mathbf{h_1, h_2 \ldots, h_{i}}$, if $\theta_i$ denotes the angle $\mathbf{h_j}$ subtends with this $i$-dimensional subspace, the projection of $\mathbf{h_i}$ on this subspace $\cos^2\theta_i$ has a $\beta$ distribution with parameters $i$ and $M-i$ (see \cite{jindal_AntComb} and \cite{Gupta_beta} for details). $\sin^2\theta_i = 1- \cos^2\theta_i$ also has the beta distributions with shift of parameters $\beta\left(M-i,i\right)$. The distribution of $\sin^2\theta_i$, denoted as $F_{\sin^2\theta_i}(M;x)$ is given by
\beq
F_{\sin^2\theta_i}(M;x) = \frac{B_x(M-i,i)}{B(M-i,i)} = \frac{(M-1)!}{(M-i-1)!(i-1)!} \int_{t=0}^{x} t^{M-i-1} (1-t)^{i-1} dt,
\label{eq:angle_distr}
\eeq
where $B$ and $B_x$ denote the beta function and the regularized beta function respectively \cite{abramowitz} \cite{Gupta_beta}.

If $\theta_1$ denotes the angle that an $M$-dimensional vector $\mathbf{h_j}$ makes with an independent vector $\mathbf{h_1}$, the distribution of $\sin^2\theta_1$ is given by
\beq
F_{\sin^2\theta_1}(M;x) = x^{M-1}.
\eeq
If $\theta_2$ denotes the angle that $\mathbf{h_j}$ makes with a $2$-dimensional subspace spanned by two independent vectors $\mathbf{h_1}$ and $\mathbf{h_2}$, the distribution of $\sin^2\theta_2$ is given by
\beq
F_{\sin^2\theta_2}(M;x) = (M-1) x^{M-2} - (M-2) x^{M-1}.
\eeq
If $\theta_3$ denotes the angle that $\mathbf{h_j}$ makes with a $3$-dimensional subspace spanned by three independent vectors $\mathbf{h_1}$, $\mathbf{h_2}$ and $\mathbf{h_3}$, the distribution of $\sin^2\theta_3$ is given by
\beq
F_{\sin^2\theta_3}(M;x) = \frac{(M-1)(M-2)(M-3)}{2} \left( \frac{x^{M-1}}{M-1} - \frac{2 x^{M-2}}{M-2} + \frac{x^{M-3}}{M-3} \right).
\eeq
These distributions can be obtained by putting the appropriate value for the dimension of the subspace w.r.t. which orthogonalization is being performed in eq. (\ref{eq:angle_distr}).

We saw that the energy in the orthogonal projection of one vector over another independent vector assumes the CDF of $F_{\sin^2\theta_1}(M;x)$. If there are $K$ such projections (each with CDF of $F_{\sin^2\theta_1}(M;x)$), the CDF of the largest (1st order) projection is given by
\beq
F_{\sin^2\theta_1}(M,1,K;x) = \left[ F_{\sin^2\theta_1}(M;x) \right]^K = x^{K(M-1)}.
\eeq
\section{Norm-Based User Selection}
\label{app:NUS}
In the proof of the theorem for NUS and the rest of the appendices, we make extensive use of the useful CDFs which have been grouped together in appendix \ref{app:CDFs} so we highly encourage the readers to go through the previous appendix for proper understanding of these proofs and the notation associated to those CDFs.

For NUS, the users are chosen as described in section \ref{sec:rev_selection}. The squared norm of the first selected user is the largest among $K$ users and hence is distributed as $F_{||\mathbf{h}||^2}(M,1,K;x)$. Similarly the squared norms of the second, third and the fourth selected users are the 2nd, 3rd, 4th largest order statistics and hence distributed as $F_{||\mathbf{h}||^2}(M,2,K;x)$, $F_{||\mathbf{h}||^2}(M,3,K;x)$ and $F_{||\mathbf{h}||^2}(M,4,K;x)$ respectively. We reproduce the expression for minimum instantaneous transmit power below
\beq
p_{\mathrm{tx}}(\mathbf{h_1, h_2, \ldots h_{K_s}})  = \sigma^2 \gamma \sum_{i=1}^{K_s} \frac{1}{\mathbf{||h_i||^2} \sin^2\theta_{(i-1)}}.
\eeq
As these users are selected solely based upon their channel norms and the Gaussian distributed vectors have independent norms and directions, the directional properties of these vectors are as if they are randomly selected. Hence $\sin^2\theta_1$ is distributed as $F_{\sin^2\theta_1}(M;x)$, the distribution specified in Appendix \ref{app:CDFs}. Similarly $\sin^2\theta_2$ and $\sin^2\theta_3$ are distributed as the CDFs of $\sin^2\theta$ of a vector with a random $2$ and $3$-dimensional subspace, hence distributed as $F_{\sin^2\theta_2}(M;x)$ and $F_{\sin^2\theta_3}(M;x)$ respectively. 

For the case of multiple users, we have to perform SIC (considering UL) or DPC based encoding (considering DL) in a particular ordering. It's known that for the objective of the minimization of transmit power, the weaker user should be the one which gets decoded with the least interference \cite{tse_book}. This optimal ordering requires that the strongest user (distributed as $F_{||\mathbf{h}||^2}(M,1,K;x)$) should be the one facing the interference of all the users when its signal is decoded. This implies that its interference subspace would be ($K_s-1$)-dimensional and $\sin^2$ of the angle with this subspace would be distributed as $F_{\sin^2\theta_{(K_s-1)}}(M;x)$. Hence the average power corresponding to this strongest user (decoded with maximum interference) is given by
\beq
\mathbb{E} p_{\mathrm{tx}}(\mathbf{h_{K_s}})  = \sigma^2 \gamma   \mathbb{E}_{F_{||\mathbf{h}||^2}(M,1,K;x)} \left[\frac{1}{x}\right] \mathbb{E}_{F_{\sin^2\theta_{(K_s-1)}}(M;x)} \left[\frac{1}{x}\right].
\eeq

Similarly the weakest user has the distribution of $F_{||\mathbf{h}||^2}(M,K_s,K;x)$ and its signal gets decoded with no interference (as if it were alone). Hence the average transmit power allocated to the stream of this user is given by
\beq
\mathbb{E} p_{\mathrm{tx}}(\mathbf{h_{1}})  = \sigma^2 \gamma   \mathbb{E}_{F_{||\mathbf{h}||^2}(M,K_s,K;x)} \left[\frac{1}{x}\right] \mathbb{E}_{F_{\sin^2\theta_{(1-1)}}(M;x)} \left[\frac{1}{x}\right],
\eeq
where $\theta_0 = \frac{\pi}{2}$ by definition in the Lemma of transmit power.

For the user whose signal gets decoded with $(i-1)$ interference streams (decoded at $i$-th order) would be the one selected at $(K_s+1-i)$-th iteration of NUS, hence its squared norm would be distributed as $F_{||\mathbf{h}||^2}(M,K_s+1-i,K;x)$. As its interference subspace is $(i-1)$-dimensional, the $\sin^2$ of its angle with this subspace is distributed as $F_{\sin^2\theta_{(i-1)}}(M;x)$. This permits us to write the average transmit power allocated for this user to be
\beq
\mathbb{E} p_{\mathrm{tx}}(\mathbf{h}_{i})  = \sigma^2 \gamma   \mathbb{E}_{F_{||\mathbf{h}||^2}(M,K_s+1-i,K;x)} \left[\frac{1}{x}\right] \mathbb{E}_{F_{\sin^2\theta_{(i-1)}}(M;x)} \left[\frac{1}{x}\right].
\eeq

The average transmit powers allocated to the streams of all users can be summed up to get the total average minimum transmit power to achieve SINR targets when the users have been selected through NUS and is given by
\beq
p_{\mathrm{N}}(K_s)  =  \sigma^2 \gamma \sum_{i=1}^{K_s} \left( \mathbb{E}_{F_{||\mathbf{h}||^2}(M,K_s+1-i,K;x)} \left[\frac{1}{x}\right] \mathbb{E}_{F_{\sin^2\theta_{(i-1)}}(M;x)}\left[\frac{1}{x}\right] \right).
\eeq
\subsection{NUS for $2$ Users}
\label{app:NUS_2}
When only two users are selected through NUS for simultaneous transmission, the ordering strategy remains the same (the weaker user gets decoded with no interference) and the average power required can be computed by taking only the first two terms of the general NUS transmit power expression.
\beq
p_{\mathrm{N}} (2)  =  \sigma^2 \gamma \left( \mathbb{E}_{F_{||\mathbf{h}||^2}(M,2,K;x)} \left[\frac{1}{x}\right] + \mathbb{E}_{F_{||\mathbf{h}||^2}(M,1,K;x)} \left[\frac{1}{x}\right] \mathbb{E}_{F_{\sin^2\theta_{1}}(M;x)}\left[\frac{1}{x}\right] \right)
\label{eq:NUS_2_users}
\eeq
The PDF corresponding to CDF $F_{||\mathbf{h}||^2}(M,1,K;x)$ (obtained by its differentiation) is given by
\beq
f_{||\mathbf{h}||^2}(M,1,K;x) = K \left[G(M,x)\right]^{K-1} \frac{e^{-x} x^{M-1}}{\Gamma(M)}.
\eeq
It allows us to compute the following expectation:
\begin{eqnarray}
\mathbb{E}_{F_{||\mathbf{h}||^2}(M,1,K;x)} \left[\frac{1}{x}\right] & = & \int_0^{\infty} \left[\frac{1}{x}\right] K \left[G(M,x)\right]^{K-1} \frac{e^{-x} x^{M-1}}{\Gamma(M)} dx \nonumber \\
 & = & \int_0^{\infty} K \frac{e^{-x} x^{M-2}}{\Gamma(M)} \left[G(M,x)\right]^{K-1} dx  \nonumber \\
 & = & \alpha_{\mathrm{M,K}}
\end{eqnarray}
where the last equality is the definition of the constant term $\alpha_{\mathrm{M,K}}$, defined in eq. (\ref{eq:alpha}), which only depends upon the specific values of $M$ and $K$. Similarly it can be shown that 
\beq
\mathbb{E}_{F_{||\mathbf{h}||^2}(M,2,K;x)} \left[\frac{1}{x}\right] = K \alpha_{\mathrm{M,K-1}} - (K-1) \alpha_{\mathrm{M,K}}.
\eeq
The expectation concerning the angle distribution can also be computed as follows:
\beq
\mathbb{E}_{F_{\sin^2\theta_{1}}(M;x)}\left[\frac{1}{x}\right] = \int_0^1 \left[\frac{1}{x}\right] (M-1) x^{M-2} dx = \frac{M-1}{M-2}.
\eeq
Combining the results of these expectations in eq. (\ref{eq:NUS_2_users}) and doing some rearrangements gives the result of corollary \ref{th:NUS_cor_2}.
\subsection{NUS for $4$ Users}
\label{app:NUS_4}
The average transmit power when $4$ users are selected through NUS for simultaneous transmission can be computed by taking the first four terms from the general NUS average transmit power expression and computing the expectations.
\beq
\begin{array}{l}
p_{\mathrm{N}} (4)  =  \sigma^2 \gamma \left( \mathbb{E}_{F_{||\mathbf{h}||^2}(M,4,K;x)} \left[\frac{1}{x}\right] + \mathbb{E}_{F_{||\mathbf{h}||^2}(M,3,K;x)} \left[\frac{1}{x}\right] \mathbb{E}_{F_{\sin^2\theta_{1}}(M;x)}\left[\frac{1}{x}\right] + \right. \nonumber  \\
\left. \mathbb{E}_{F_{||\mathbf{h}||^2}(M,2,K;x)} \left[\frac{1}{x}\right] \mathbb{E}_{F_{\sin^2\theta_{2}}(M;x)}\left[\frac{1}{x}\right] + \mathbb{E}_{F_{||\mathbf{h}||^2}(M,1,K;x)} \left[\frac{1}{x}\right] \mathbb{E}_{F_{\sin^2\theta_{3}}(M;x)}\left[\frac{1}{x}\right]    \right).
\end{array}
\eeq
It's just a matter of algebra to compute these expectation similar to the $2$ user case as all the distributions have been given in appendix \ref{app:CDFs}.
\section{Semi-Orthogonal User Selection}
\label{app:SUS}
In SUS, the first user is selected with the largest channel norm but the second selected user is the one whose projection on the null space of the first user has the largest norm. Let's assume that user $1$ having channel $\mathbf{h_1}$ is the first selected user, hence the user with the largest norm, whose squared norm is distributed as $F_{||\mathbf{h}||^2}(M,1,K;x)$, the 1st order statistic among $K$ instances of $M$-dimensional channels. Let us further assume that user $2$ having channel $\mathbf{h_2}$ is the second selected user. This requires that $\mathbf{h_2} \sin(\theta_{1})$, the projection of channel vector $\mathbf{h_2}$ on the null space of the space spanned by $\mathbf{h_1}$, has the largest norm among $K-1$ users if these $K-1$ users' channel are projected on the null space of $\mathbf{h_1}$. Statistically this is the largest among $K-1$ norms in $M-1$ dimensional space (dimension reduction due to projection) conditioned upon the selection of the largest norm channel $\mathbf{h_1}$. Unfortunately this CDF is very hard to compute so we ease the computation using \cite[Lemma 3]{Maddah_ITIT08}, which was also used in \cite[Appendix III]{yoo_ZF_BF}. The term $||\mathbf{\check{h}_2}||^2 \stackrel{\Delta}{=} ||\mathbf{h_2||}^2 \sin^2(\theta_{1})$ is the maximum of $K-1$ channel norms orthogonalized w.r.t. $\mathbf{h_1}$. Following \cite{Maddah_ITIT08}, we can orthogonalize all the channel vectors w.r.t. an arbitrary vector so for each of them the squared norm is $\chi^2$ distributed with $2(M-1)$ degrees of freedom and each has the distribution which is given by $F_{||\mathbf{h}||^2}(M-1;x)$. Let us denote the projection of $\mathbf{h_i}$ on the null space of that arbitrary vector by $\mathbf{\tilde{h}_i}$, then the second largest norm of these orthogonalized vectors will be
\beq
||\mathbf{\hat{h}_2}||^2 = 2^{\mathrm{nd}} \max ||\mathbf{\tilde{h}_i}||^2, i=1,\ldots K
\eeq
whose distribution is given by $F_{||\mathbf{h}||^2}(M-1,2,K;x)$, the second largest of $K$ instances in $(M-1)$-dimensional space. Lemma $3$ in \cite{Maddah_ITIT08} shows that statistically $||\mathbf{\hat{h}_2}||^2$ is smaller than $||\mathbf{\check{h}_2}||^2$. The same procedure is repeated for the third iteration of the SUS and hence the third selected user is the third maximum of the $K$ users' channel norms which have been orthogonalized w.r.t. two arbitrary vectors. Hence the norm squared of the third user has the distribution of $F_{||\mathbf{h}||^2}(M-2,3,K;x)$, the third largest of the $K$ users in $(M-2)$-dimensional space. This procedure generalizes hence $i$-th selected user's squared channel norm would be distributed as $F_{||\mathbf{h}||^2}(M+1-i,i,K;x)$, $i$-th largest among $K$ users orthogonal to $(i-1)$-dimensional subspace. Thus the average of minimum transmit power required to satisfy SINR targets at $K_s$ users when they are selected through SUS is given by:
\beq
p_{\mathrm{S}}(K_s)  =  \sigma^2 \gamma \sum_{i=1}^{K_s} \left( \mathbb{E}_{F_{||\mathbf{h}||^2}(M+1-i,i,K;x)} \left[\frac{1}{x}\right] \right).
\eeq
We need to keep in mind that as the orthogonalized norms were replaced by their lower bounds in the derivation, the average transmit power from the above expressions $p_{\mathrm{S}}(K_s)$ will actually be the upper bound (performance lower bound) of the minimum power required with SUS.
\subsection{SUS for $2$ Users}
\label{app:SUS_2}
When only two users are selected through SUS, the average minimum power required can be computed by taking only the first two terms of the general SUS transmit power.
\beq
p_{\mathrm{S}} (2)  =  \sigma^2 \gamma \left( \mathbb{E}_{F_{||\mathbf{h}||^2}(M,1,K;x)} \left[\frac{1}{x}\right] + \mathbb{E}_{F_{||\mathbf{h}||^2}(M-1,2,K;x)} \left[\frac{1}{x}\right] \right).
\eeq
Computing the expectations gives the result of the corollary.
\subsection{SUS for $4$ Users}
\label{app:SUS_4}
When $4$ users are selected simultaneously and selection is done through SUS, the average minimum power required can be computed by taking the first four terms of the general SUS transmit power.
\beq
\begin{array}{l}
p_{\mathrm{S}} (4)  =  \sigma^2 \gamma \left( \mathbb{E}_{F_{||\mathbf{h}||^2}(M,1,K;x)} \left[\frac{1}{x}\right] + \mathbb{E}_{F_{||\mathbf{h}||^2}(M-1,2,K;x)} \left[\frac{1}{x}\right] + \mathbb{E}_{F_{||\mathbf{h}||^2}(M-2,3,K;x)} \left[\frac{1}{x}\right] + \right. \nonumber \\
\left. \mathbb{E}_{F_{||\mathbf{h}||^2}(M-3,4,K;x)} \left[\frac{1}{x}\right] \right).
\end{array}
\eeq
\section{Random User Selection}
\label{app:RUS}
In RUS, the users are selected randomly. Hence the norms and the directions of the channels of the selected users are randomly distributed. So the squared norms of the channel vectors for all selected users are distributed as $F_{||\mathbf{h}||^2}(M;x)$. Similar to the norm distributions, the directions of the selected users are also random and independent of each other. Hence $\sin^2\theta_i$ (where $\theta_i$ is the angle that a channel vector makes with an independent $i$-dimensional subspace is distributed as $F_{\sin^2\theta_i}(M;x)$. The angles (and $\sin^2$) distributions follow the same pattern as in NUS. So the average transmit power to reach SINR constraint of $\gamma$ at each selected user when these users are chosen using RUS is given by the following expression:
\beq
p_{\mathrm{R}}(K_s)  =  \sigma^2 \gamma \left( \mathbb{E}_{F_{||\mathbf{h}||^2}(M;x)} \left[\frac{1}{x}\right]\right) \sum_{i=1}^{K_s} \left( \mathbb{E}_{F_{\sin^2\theta_{(i-1)}}(M;x)} \left[\frac{1}{x}\right] \right).
\label{eq:rus_proof}
\eeq
The above expectations can be easily computed using the distributions given in appendix \ref{app:CDFs} and turn out to be:
\beq
\mathbb{E}_{F_{||\mathbf{h}||^2}(M;x)} \left[\frac{1}{x}\right] = \frac{1}{M-1}
\eeq
\beq
\mathbb{E}_{F_{\sin^2\theta_{(i-1)}}(M;x)} \left[\frac{1}{x}\right] = \frac{M-1}{M-i}
\eeq
Putting these values in equation (\ref{eq:rus_proof}), we get the expression for the average minimum transmit power to achieve SINR targets when users are selected through RUS:
\beq
p_{\mathrm{R}}(K_s)  =  \gamma \sigma^2 \sum_{i=1}^{K_s} \frac{1}{M-i}
\eeq
\section{Angle-Based User Selection}
\label{app:AUS}
When $2$ users are selected through AUS, the first selected user is the strongest user whose squared norm is distributed as $F_{||\mathbf{h}||^2}(M,1,K;x)$, the first order statistic of squared norm among $K$ users. As norms and directions are independent, the distribution of $\sin^2$ of the angle that other $K-1$ vectors individually make with the first selected vector (or $1$-dimensional subspace) all follow the distribution of $F_{\sin^2\theta_{1}}(M;x)$. The second selected user among $K-1$ users is the one making the largest angle with the first user. Hence statistically $\sin^2$ of this angle is the largest order statistic among $K-1$ instances and is distributed as $F_{\sin^2\theta_1}(M,1,K-1;x)$ (see appendix \ref{app:CDFs} for details). The squared norm of the second selected user is distributed as the squared norm of any random user which is not the user with the largest norm and hence the CDF is $F_{||\mathbf{h}||^2}(M,\acute{1},K;x)$, (see eq. (\ref{eq:CDF_not_largest}) in appendix \ref{app:CDFs}). We keep the same user ordering as detailed in NUS such that weaker user's signal gets decoded with less (no) interference. The average transmit power for this user selection is given by:
\beq
p_{\mathrm{A}} (2) =  \sigma^2 \gamma \left(\mathbb{E}_{F_{||\mathbf{h}||^2}(M,\acute{1},K;x)} \left[\frac{1}{x}\right] + \mathbb{E}_{F_{||\mathbf{h}||^2}(M,1,K;x)} \left[\frac{1}{x}\right] \mathbb{E}_{F_{\sin^2\theta_1}(M,1,K-1;x)} \left[\frac{1}{x}\right] \right).
\eeq
This will give the result for the case of two users. Unfortunately we could not extend the average power requirement with AUS to the general case of $K_s$ users due to added complexity.
\section{Performance Benchmark}
\label{app:LB}
To compute a lower bound on the minimum average transmit power (the performance upper bound) required to satisfy SINR targets of $\gamma$, we assume that the two selected users have the two largest norms as in NUS with CDFs as $F_{||\mathbf{h}||^2}(M,1,K;x)$ and $F_{||\mathbf{h}||^2}(M,2,K;x)$ and the angle between their channel vectors is the largest angle possible as in AUS, distributed as $F_{\sin^2\theta_1}(M,1,K-1;x)$. Hence with optimal ordering (the weaker user gets decoded with no interference), the lower bound on the average transmit power can be obtained by computing the expectations in the following expression:
\beq
p_{\mathrm{L}} (2)  =  \sigma^2 \gamma \left(\mathbb{E}_{F_{||\mathbf{h}||^2}(M,2,K;x)} \left[\frac{1}{x}\right] + \mathbb{E}_{F_{||\mathbf{h}||^2}(M,1,K;x)} \left[\frac{1}{x}\right] \mathbb{E}_{F_{\sin^2\theta_1}(M,1,K-1;x)} \left[\frac{1}{x}\right] \right).
\eeq
Like in AUS case, we could not extend this lower bound to the general case when $K_s$ users are selected due to the appearance of very complicated CDFs for norms and angles in the expression.
%
%
\bibliographystyle{IEEEbib}
\bibliography{strings,refs}

\end{document}